\newtheorem{theorem}{Theorem}
\newtheorem{lemma}[theorem]{Lemma}
\newcommand{\TNorm}[1]{\mbox{}\|#1\|_2}
\newcommand{\abs }[1]{\left|#1\right|}
\newcommand{\entropy}[1]{\ensuremath{\mathcal{H}\left(#1\right)}}
\newcommand{\aentropy}[1]{\ensuremath{\widehat{\mathcal{H}}}\left(#1\right)}
\newcommand{\trace}[1]{\ensuremath{\mathrm{\textbf{\rm \bf tr}}\left(#1\right)}}
\newcommand{\nnz}{\mathrm{nnz}}
\newcommand{\bA}{\mathbf{A}}
\newcommand{\be}{\mathbf{e}}
\newcommand{\bg}{\mathbf{g}}
\newcommand{\bD}{\mathbf{D}}
\newcommand{\bE}{\mathbf{E}}
\newcommand{\bU}{\mathbf{U}}
\newcommand{\bx}{\mathbf{x}}
\newcommand{\by}{\mathbf{y}}
\newcommand{\bI}{\mathbf{I}}
\newcommand{\bB}{\mathbf{B}}
\newcommand{\bC}{\mathbf{C}}
\newcommand{\bH}{\mathbf{H}}
\newcommand{\bR}{\mathbf{R}}
\newcommand{\bS}{\mathbf{S}}
\newcommand{\bSigma}{\mathbf{\Sigma}}
\newcommand{\bPi}{\mathbf{\Pi}}
\newcommand{\bPsi}{\mathbf{\Psi}}
\newcommand{\bpsi}{\boldsymbol{\psi}}
\newcommand{\bzero}{\mathbf{0}}
\newcommand{\real}{\mathbb{R}}
\newcommand{\comp}{\mathbb{C}}
\newcommand{\OO}{\mathcal{O}}
\def\ceil#1{{\left\lceil\,#1\,\right\rceil}} 
\begin{document}

\title{Randomized Linear Algebra Approaches to Estimate the Von Neumann Entropy of Density Matrices}
\author{Eugenia-Maria Kontopoulou \thanks{Purdue University. West Lafayette, IN. Email: ekontopo@purdue.edu.}\and
Gregory-Paul Dexter \thanks{Purdue University. West Lafayette, IN. Email: gdexter@purdue.edu.} \and
Wojciech Szpankowski \thanks{Purdue University. West Lafayette, IN. Email: spa@cs.purdue.edu.}\and
Ananth Grama\thanks{Purdue University. West Lafayette, IN. Email: ayg@cs.purdue.edu.} \and
Petros Drineas \thanks{Purdue University. West Lafayette, IN. Email: pdrineas@purdue.edu.}}
\maketitle

\begin{abstract}
 The \textit{von Neumann entropy}, named after John von Neumann, is an extension of the classical concept of entropy to the field of quantum mechanics. From a numerical perspective, von Neumann entropy can be computed simply by computing all eigenvalues of a density matrix, an operation that could be prohibitively expensive for large-scale density matrices. We present and analyze three randomized algorithms to approximate von Neumann entropy of {real} density matrices: our algorithms leverage recent developments in the Randomized Numerical Linear Algebra (RandNLA) literature, such as randomized trace estimators, provable bounds for the power method, and the use of random projections to approximate the eigenvalues of a matrix. All three algorithms come with provable accuracy guarantees and our experimental evaluations support our theoretical findings showing considerable speedup with small loss in accuracy.
\end{abstract}

\section{Introduction}

Entropy is a fundamental quantity in many areas of science and engineering. \textit{von Neumann entropy}, named after John von Neumann, is an extension of classical entropy concepts to the field of quantum mechanics. Its foundations can be traced to von Neumann's work on \textit{Mathematische Grundlagen der Quantenmechanik}\footnote{Originally published in German in 1932; published in English under the title \textit{Mathematical Foundations of Quantum Mechanics} in 1955.}. In his work, Von Neumann introduced the notion of a \textit{density matrix}, which facilitated extension of the tools of classical statistical mechanics to the quantum domain in order to develop a theory of quantum mechanics.

From a mathematical perspective (see Section~\ref{sxn:notation} for details) the {real} density matrix $\bR$ is a symmetric positive semidefinite matrix in $\mathbb{R}^{n \times n}$ with unit trace. Let $p_i$, $i=1\ldots n$ be the eigenvalues of $\bR$ in decreasing order; then, the entropy of $\bR$ is defined as\footnote{$\bR$ is symmetric positive semidefinite and thus all its eigenvalues are non-negative. If $p_i$ is equal to zero we set $p_i \ln p_i$ to zero as well.}
\begin{equation}\label{eqn:earlyR}
\mathcal{H}(\bR) = -\sum_{i=1}^n p_i \ln p_i.
\end{equation}
The above definition is a proper extension of both the Gibbs entropy and the Shannon entropy to the quantum case. It implies an obvious algorithm to compute $\mathcal{H}(\bR)$ by computing the eigendecomposition of $\bR$; known algorithms for this task can be prohibitively expensive for large values of $n$, particularly when the matrix becomes dense~\cite{GVL96}. For example,~\cite{Wihler2014} describes an entangled two-photon state generated by spontaneous parametric down-conversion, which can result in a {sparse and banded} density matrix with $n\approx 10^8$.

Motivated by the high computational cost, we seek numerical algorithms that approximate the von Neumann entropy of large density matrices, e.g., symmetric positive definite matrices with unit trace, faster than the trivial $\OO(n^3)$ approach. Our algorithms build upon recent developments in the field of Randomized Numerical Linear Algebra (RandNLA), an interdisciplinary research area that exploits randomization as a computational resource to develop improved algorithms for large-scale linear algebra problems. Indeed, our work here focuses at the intersection of RandNLA and information theory, delivering novel randomized linear algebra algorithms and related quality-of-approximation results for a fundamental information-theoretic metric.

\subsection{Background}\label{sxn:notation}
We focus on finite-dimensional function (state) spaces. In this setting, the density matrix $\bR$ represents the statistical mixture of $k\leq n$ pure states, and has the form
\begin{equation}\label{eqn:R}
\bR = \sum_{i=1}^k p_i \bpsi_i \bpsi_i^T  \in \mathbb{R}^{n \times n}.
\end{equation}
The vectors $\bpsi_i \in \mathbb{R}^n$ for $i=1\ldots k$ represent the $k \leq n$ pure states and can be assumed to be pairwise orthogonal and normal, while $p_i$'s correspond to the probability of each state and satisfy $p_i > 0$ and $\sum_{i=1}^k p_i = 1$. From a linear algebraic perspective, eqn.~(\ref{eqn:R}) can be rewritten as
\begin{equation}\label{eqn:R1}
\bR = \bPsi \bSigma_p \bPsi^T \in \mathbb{R}^{n \times n},
\end{equation}
where $\bPsi \in \mathbb{R}^{n \times k}$ is the matrix whose columns are the vectors $\bpsi_i$ and $\bSigma_p \in \mathbb{R}^{k \times k}$ is a diagonal matrix whose entries are the (positive) $p_i$'s. Given our assumptions for $\bpsi_i$, $\bPsi^T\bPsi=\bI$; also $\bR$ is symmetric positive semidefinite with its eigenvalues equal to $p_i$ and corresponding left/ right singular vectors equal to $\bpsi_i$'s; and $\trace{\bR}=\sum_{i=1}^k p_i = 1$. Notice that eqn.~(\ref{eqn:R1}) essentially reveals the (thin) Singular Value Decomposition (SVD)~\cite{GVL96} of $\bR$. The Von Neumann entropy of $\bR$, denoted by $\mathcal{H}(\bR)$ is equal to (see also eqn.~(\ref{eqn:earlyR}))
\begin{equation}\label{eqn:VN}
\mathcal{H}(\bR) = -\sum_{i: p_i> 0} p_i \ln p_i = -\trace{\bR \ln \bR}.
\end{equation}
The second equality follows from the definition of matrix functions~\cite{Higham:2008}. More precisely, we overload notation and consider the full SVD of $\bR$, namely $\bR = \bPsi \bSigma_p \bPsi^T$, where $\bPsi \in \mathbb{R}^{n \times n}$ is an orthogonal matrix whose top $k$ columns correspond to the $k$ pure states and the bottom $n-k$ columns are chosen so that $\bPsi \bPsi^T = \bPsi^T \bPsi = \bI_n$. Here $\bSigma_p$ is a diagonal matrix whose bottom $n-k$ diagonal entries are set to zero. Let $h(x) = x \ln x$ for any $x>0$ and let $h(0)=0$. Then, using the cyclical property of the trace and the definition of $h(x)$,
\begin{eqnarray}
\nonumber-\sum_{i, p_i> 0} p_i \ln p_i
\nonumber &=& -\trace{\bPsi h(\bSigma_{p}) \bPsi^T}\\
\nonumber &=& -\trace{h(\bR)}\\
&=& -\trace{\bR \ln \bR}.
\end{eqnarray}

\subsection{Trace estimators}

The following lemma appeared in~\cite{Boutsidis2016} and is immediate from Theorem~5.2 in~\cite{AT11}. It implies an algorithm to approximate the trace of any symmetric positive semidefinite matrix $\bA$ by computing inner products of the matrix with Gaussian random vectors.
\begin{lemma}
	\label{thm:trace}
	Let $\bA \in \mathbb{R}^{n \times n}$ be a positive semi-definite matrix, let $0<\epsilon < 1$ be an accuracy parameter, and let $0<\delta<1$ be a failure probability. If $\bg_1,\bg_2,\ldots, \bg_s \in\mathbb{R}^n$ are independent random standard Gaussian vectors, then, for $s= \left\lceil 20 \ln(2/\delta) / \epsilon^2 \right \rceil$, with probability at least $1 - \delta$,
	\begin{equation*}\label{eq:trApprox}
	\abs{
		\trace{\bA} - \frac1{s} \sum_{i=1}^{s} \bg_i^\top \bA \bg_i
	} \leq \epsilon \cdot \trace{\bA}.
	\end{equation*}
\end{lemma}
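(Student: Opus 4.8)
The plan is to show that the estimator $\frac1s\sum_{i=1}^s \bg_i^\top \bA \bg_i$ is an unbiased and sharply concentrated approximation of $\trace{\bA}$, by reducing each quadratic form to a weighted sum of independent chi-square variables and then applying a Bernstein/sub-gamma tail bound; the one place where positive semidefiniteness is genuinely needed is in converting an absolute deviation bound into a \emph{relative} one. First I would diagonalize: write $\bA = \bU \bLambda \bU^\top$ with $\bLambda = \mathrm{diag}(\lambda_1,\ldots,\lambda_n)$ and $\lambda_j \geq 0$. By rotational invariance of the Gaussian, $\bz = \bU^\top \bg$ is again a standard Gaussian vector, so $\bg^\top \bA \bg = \sum_{j=1}^n \lambda_j z_j^2$ with the $z_j$ i.i.d.\ $N(0,1)$. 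Hence each summand $X_i := \bg_i^\top \bA \bg_i$ satisfies $\Expect{X_i} = \sum_j \lambda_j = \trace{\bA}$, giving unbiasedness, and $\Varnce{X_i} = 2\sum_j \lambda_j^2 = 2\FNorm{\bA}^2$.

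Next I would control the moment generating function. For $0 < t < 1/(2\lambda_1)$ one has $\Expect{e^{t X_i}} = \prod_j (1 - 2t\lambda_j)^{-1/2}$, and taking logarithms and using $-\log(1-u) \leq u + u^2/(2(1-u))$ shows that $X_i - \trace{\bA}$ is sub-gamma with variance factor $\nu = 2\FNorm{\bA}^2$ and scale $c = 2\TNorm{\bA}$. Averaging $s$ independent copies rescales these parameters to $\nu/s$ and $c/s$, and the standard two-sided sub-gamma deviation inequality yields, for any $x > 0$,
$$\Probab{\abs{\frac{1}{s}\sum_{i=1}^s X_i - \trace{\bA}} \geq \sqrt{2\nu x/s} + c x/s} \leq 2 e^{-x}.$$

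The crucial step is to turn this absolute bound into the claimed relative one, and this is exactly where the PSD hypothesis enters. Since all $\lambda_j \geq 0$, we have $\TNorm{\bA} = \lambda_1 \leq \sum_j \lambda_j = \trace{\bA}$ and $\FNorm{\bA}^2 = \sum_j \lambda_j^2 \leq \big(\sum_j \lambda_j\big)^2 = \trace{\bA}^2$. Substituting $\nu \leq 2\,\trace{\bA}^2$ and $c \leq 2\,\trace{\bA}$, the deviation threshold on the left becomes at most $\left(2\sqrt{x/s} + 2x/s\right)\trace{\bA}$. Choosing $x = \ln(2/\delta)$ so that the failure probability is $2e^{-x} = \delta$, and $s = \lceil 20\ln(2/\delta)/\epsilon^2\rceil$, a direct computation (for $\epsilon < 1$, $x/s \leq \epsilon^2/20$ gives $2\sqrt{x/s} \leq 0.45\epsilon$ and $2x/s \leq 0.1\epsilon$) verifies $2\sqrt{x/s} + 2x/s \leq \epsilon$, which delivers the stated $\epsilon\,\trace{\bA}$ relative error.

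The main obstacle is the sub-gamma MGF estimate together with carrying the numerical constants cleanly enough to certify that the prescribed $s$ (with the constant $20$) suffices; the diagonalization, the unbiasedness and variance computations, and the PSD norm comparisons $\TNorm{\bA},\FNorm{\bA} \leq \trace{\bA}$ are all routine. I note that this lemma is precisely Theorem~5.2 of~\cite{AT11}, so one may alternatively invoke it as a black box; the sketch above reconstructs its proof and makes transparent why positive semidefiniteness is indispensable for a relative-error guarantee.
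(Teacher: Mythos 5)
Your proposal is correct, but it is worth noting that the paper does not actually prove this lemma at all: it is stated as imported background, with the one-line justification that it ``appeared in [Boutsidis2016] and is immediate from Theorem 5.2 in [AT11].'' So where the paper invokes a black box, you reconstruct the argument inside the box. Your reconstruction is sound and is essentially the standard route to the Avron--Toledo bound: diagonalize so that $\bg_i^\top\bA\bg_i=\sum_j\lambda_j z_j^2$ is a nonnegative mixture of chi-squares (giving unbiasedness and variance $2\FNorm{\bA}^2$), establish the sub-gamma MGF bound with variance factor $2\FNorm{\bA}^2$ and scale $2\TNorm{\bA}$, average to shrink both parameters by $s$, and then use the PSD inequalities $\TNorm{\bA}\le\trace{\bA}$ and $\FNorm{\bA}\le\trace{\bA}$ to convert the additive deviation $\sqrt{2\nu x/s}+cx/s$ into a relative one. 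Your constant check is also right: with $x=\ln(2/\delta)$ and $s\ge 20x/\epsilon^2$ one gets $2\sqrt{x/s}+2x/s\le \epsilon/\sqrt{5}+\epsilon^2/10<\epsilon$ for $\epsilon<1$, which certifies the stated $s=\lceil 20\ln(2/\delta)/\epsilon^2\rceil$. What your version buys is transparency about exactly where positive semidefiniteness is indispensable (the norm-versus-trace comparisons), which the paper itself relies on repeatedly when it goes to lengths to verify that the matrices it feeds into this lemma, such as $\sum_{k=1}^m\bR\bC^k/k$ and $-f_m(\bR)$, are PSD; what the paper's citation buys is brevity. The only cosmetic caveat is that a fully rigorous write-up would spell out the two-sided sub-gamma tail inequality (upper tail sub-gamma, lower tail sub-Gaussian) rather than asserting it, and dispose of the degenerate case $\bA=\mathbf{0}$, but neither is a gap in the argument.
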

\subsection{Our contributions}

We present and analyze three randomized algorithms to approximate the von Neumann entropy of density matrices. The \textit{first two} algorithms (Sections~\ref{sxn:Taylor} and~\ref{sxn:cheb}) leverage two different polynomial approximations of the matrix function  $\mathcal{H}(\bR) = -\trace{\bR \ln \bR}$: the first approximation uses a Taylor series expansion, while the second approximation uses Chebyschev polynomials. Both algorithms return, with high probability, relative-error approximations to the true entropy of the input density matrix, under certain assumptions. More specifically, in both cases, we need to assume that the input density matrix has $n$ non-zero eigenvalues, or, equivalently, that the probabilities $p_i$, $i=1\ldots n$, corresponding to the underlying $n$ pure states are non-zero. The running time of both algorithms is proportional to the sparsity of the input density matrix and depends (see Theorems~\ref{thm:taylor} and~\ref{thm:cheb} for precise statements) on, roughly, the ratio of the largest to the smallest probability $p_1/p_n$ (recall that the smallest probability is assumed to be non-zero), as well as the desired accuracy.

The \textit{third} algorithm (Section~\ref{sxn:rp}) is fundamentally different, if not orthogonal, to the previous two approaches. It leverages the power of random projections~\cite{Drineas2016,Woodruff2014} to approximate numerical linear algebra quantities, such as the eigenvalues of a matrix. Assuming that the density matrix $\bR$ has exactly $k \ll n$ non-zero eigenvalues, e.g., there are $k$ pure states with non-zero probabilities $p_i$, $i=1\ldots k$, the proposed algorithm returns, with high probability, relative error approximations to all $k$ probabilities $p_i$. This, in turn, implies an additive-relative error approximation to the entropy of the density matrix, which, under a mild assumption on the true entropy of the density matrix, becomes a relative error approximation (see Theorem~\ref{thm:rp} for a precise statement). The running time of the algorithm is again proportional to the sparsity of the density matrix and depends on the target accuracy, but, unlike the previous two algorithms, does not depend on any function of the $p_i$.

From a technical perspective, the theoretical analysis of the first two algorithms proceeds by combining the power of polynomial approximations, either using Taylor series or Chebyschev polynomials, to matrix functions, combined with randomized trace estimators. A provably accurate variant of the power method is used to estimate the largest probability $p_1$. If this estimate is significantly smaller than one, it can improve the running times of the proposed algorithms (see discussion after Theorem~\ref{thm:taylor}). The third algorithm leverages a powerful, multiplicative matrix perturbation result that first appeared in~\cite{Demmel1992}. Our work in Section~\ref{sxn:rp} is a novel application of this inequality to derive bounds for RandNLA algorithms.

Finally, in Section~\ref{sxn:experiments}, we present a detailed evaluation of our algorithms on synthetic density matrices of various sizes, most of which were generated using Matlab's QETLAB toolbox~\cite{qetlab}. For some of the larger matrices that were used in our evaluations, the exact computation of the entropy takes hours, whereas our algorithms return approximations with relative errors well below $0.5\%$ in only a few minutes.

\subsection{Prior work}

The first non-trivial algorithm to approximate the von Neumann entropy of a density matrix appeared in~\cite{Wihler2014}. Their approach is essentially the same as our approach in Section~\ref{sxn:cheb}. Indeed, our algorithm in Section~\ref{sxn:cheb} was inspired by their approach. However, our analysis is somewhat different, leveraging a provably accurate variant of the power method, as well as provably accurate trace estimators to derive a relative error approximation to the entropy of a density matrix, under appropriate assumptions. A detailed, technical comparison between our results in Section~\ref{sxn:cheb} and the work of~\cite{Wihler2014} is delegated to Section~\ref{sxn:wihler}.

Independently and in parallel with our work,~\cite{MNSUW18} presented a multipoint interpolation algorithm (building upon~\cite{HNO08}) to compute a relative error approximation for the entropy of a real matrix with bounded condition number. The proposed running time of Theorem 35 of~\cite{MNSUW18} does not depend on the condition number of the input matrix (i.e., the ratio of the largest to the smallest probability), which is a clear advantage in the case of ill-conditioned matrices. However, the dependence of the algorithm of Theorem 35 of~\cite{MNSUW18} on terms like $(\log{n}/\epsilon)^6$ or $n^{1/3} \nnz(\bA)+ n \sqrt{\nnz(\bA)}$ (where $\nnz(\bA)$ represents the number of non-zero elements of the matrix $\bA$) could blow up the running time of the proposed algorithm for reasonably conditioned matrices.

We also note the recent work in~\cite{Boutsidis2016}, which used Taylor approximations to matrix functions to estimate the log determinant of symmetric positive definite matrices (see also Section 1.2 of~\cite{Boutsidis2016} for an overview of prior work on approximating matrix functions via Taylor series). The work of~\cite{Han2015} used a Chebyschev polynomial approximation to estimate the log determinant of a matrix and is reminiscent of our approach in Section~\ref{sxn:cheb} and, of course, the work of~\cite{Wihler2014}.

We conclude this section by noting that our algorithms use two tools (described, for the sake of completeness, in the Appendix) that appeared in prior work. The first tool is the power method, with a provable analysis that first appeared in~\cite{LT_Lecture}. The second tool is a provably accurate trace estimation algorithm for symmetric positive semidefinite matrices that appeared in~\cite{AT11}. 

\section{An approach via Taylor series}\label{sxn:Taylor}

Our first approach to approximate the von Neumann entropy of a density matrix uses a Taylor series expansion to approximate the logarithm of a matrix, combined with a relative-error trace estimator for symmetric positive semi-definite matrices and the power method to upper bound the largest singular value of a matrix.

\subsection{Algorithm and Main Theorem}
Our main result is an analysis of Algorithm~\ref{alg1a} (see below) that guarantees relative error approximation to the entropy of the density matrix $\bR$, under the assumption that $\bR = \sum_{i=1}^n p_i \bpsi_i \bpsi_i^T  \in \mathbb{R}^{n \times n}$ has $n$ pure states with $0 < \ell \leq p_i$ for all $i=1\ldots n$.
\begin{algorithm}[h!]
\begin{algorithmic}[1]
\STATE {\bf{INPUT}}: $\bR \in \mathbb{R}^{n \times n}$, accuracy parameter $\varepsilon > 0$, failure probability $\delta$, and integer $m >0$.
\STATE Compute $\tilde{p}_1$, the estimate of the largest eigenvalue of $\bR$, $p_1$, using Algorithm~\ref{alg:power} (see Appendix) with $t=\OO(\ln n)$ and $q=\OO(\ln(1/\delta))$.
\STATE Set $u=\min\{1,6\tilde{p}_1\}$.
\STATE Set $s = \left\lceil 20 \ln(2/\delta) / \varepsilon^2 \right \rceil$.
\STATE Let $\bg_1,\bg_2,\ldots, \bg_s \in \mathbb{R}^n$ be i.i.d. random Gaussian vectors.
\STATE \textbf{OUTPUT}: return
$$\aentropy{\bR} = \ln u^{-1} + \frac{1}{s}\sum_{i=1}^{s}\sum_{k=1}^{m}\frac{\bg_i^\top \bR (\bI_n-u^{-1}\bR)^k \bg_i}{k}.$$
\end{algorithmic}
\caption{{A Taylor series approach to estimate the entropy.}}
\label{alg1a}
\end{algorithm}
The following theorem is our main quality-of-approximation result for Algorithm~\ref{alg1a}.
\begin{theorem}
	\label{thm:taylor}
Let $\bR$ be a density matrix such that all probabilities $p_i$, $i=1\ldots n$ satisfy $0 < \ell \leq p_i$. Let $u$ be computed as in Algorithm~\ref{alg1a} and let $\aentropy{\bR}$ be the output of Algorithm~\ref{alg1a} on inputs $\bR$, $m$, and $\epsilon<1$;
Then, with probability at least $1-2\delta$,
$$
\abs{ \aentropy{\bR} - \entropy{\bR} }  \leq
2\epsilon \entropy{\bR},$$
by setting $m=\left\lceil\frac{u}{\ell}\ln{\frac{1}{\epsilon}}\right\rceil$. The algorithm runs in time
$$\OO\left(\left(\frac{u}{\ell}\cdot\frac{\ln(1/\epsilon)}{\epsilon^2}+\ln(n)\right)\ln(1/\delta)\cdot \nnz(\bR)\right).$$ %
\end{theorem}
A few remarks are necessary to better understand the above theorem. First, $\ell$ could be set to $p_n$, the smallest of the probabilities corresponding to the $n$ pure states of the density matrix $\bR$. Second, it should be obvious that $u$ in Algorithm~\ref{alg1a} could be simply set to one and thus we could avoid calling Algorithm~\ref{alg:power} to estimate $p_1$ by $\tilde{p}_1$ and thus compute $u$. However, if $p_1$ is small, then $u$ could be significantly smaller than one, thus reducing the running time of Algorithm~\ref{alg1a}, which depends on the ratio $u/\ell$. Third, ideally, if both $p_1$ and $p_n$ were used instead of $u$ and $\ell$, respectively, the running time of the algorithm would scale with the ratio $p_1/p_n$.

\subsection{Proof of Theorem~\ref{thm:taylor}}

We now prove Theorem~\ref{thm:taylor}, which analyzes the performance of Algorithm~\ref{alg1a}. Our first lemma presents a simple expression for $\entropy{\bR}$ using a Taylor series expansion.
\begin{lemma}\label{lem1a}
Let $\bR \in \mathbb{R}^{n \times n}$ be a symmetric positive definite matrix with unit trace and whose eigenvalues lie in the interval $[\ell,u]$, for some $0< \ell \leq u \leq 1$. Then,
\begin{equation*}
\entropy{\bR} = \ln u^{-1} +  \sum_{k=1}^{\infty} \frac{\trace{\bR(\bI_n - u^{-1}\bR)^k}}{k}.
\end{equation*}
\end{lemma}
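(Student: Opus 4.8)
The plan is to reduce the matrix identity to a scalar power-series expansion of the logarithm, carried through the eigendecomposition of $\bR$. Starting from the definition $\entropy{\bR} = -\trace{\bR \ln \bR}$ of eqn.~(\ref{eqn:VN}), I would first factor out the scale $u$ by writing $\bR = u \cdot (u^{-1}\bR)$, so that $\ln \bR = (\ln u)\,\bI_n + \ln(u^{-1}\bR)$. This additive split is legitimate because $\bR$ and $u^{-1}\bR$ are simultaneously diagonalizable and the matrix logarithm (as a matrix function in the sense of~\cite{Higham:2008}) acts only on the eigenvalues, leaving the eigenvectors fixed.

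The heart of the argument is the Taylor expansion of the logarithm about $1$. Writing $\ln(u^{-1}\bR) = \ln\!\left(\bI_n - (\bI_n - u^{-1}\bR)\right)$ and invoking the scalar series $\ln(1-y) = -\sum_{k=1}^\infty y^k/k$, I would lift this to matrices to obtain $\ln(u^{-1}\bR) = -\sum_{k=1}^\infty (\bI_n - u^{-1}\bR)^k / k$. Substituting into the entropy, using linearity and the cyclic property of the trace, and invoking the unit-trace hypothesis $\trace{\bR} = 1$ to collapse the $(\ln u)\,\bI_n$ contribution to a single term $\ln u^{-1}$, the claimed identity $\entropy{\bR} = \ln u^{-1} + \sum_{k=1}^\infty \trace{\bR(\bI_n - u^{-1}\bR)^k}/k$ then falls out directly.

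The step requiring the most care---and the main obstacle---is justifying convergence of the matrix series together with the validity of the term-by-term expansion. Because $\bR$ is symmetric positive definite, the spectral theorem lets me replace the matrix series by the scalar series applied to each eigenvalue $p_i$, so convergence reduces to checking that every $y_i = 1 - p_i/u$ lies strictly inside the radius of convergence of $\ln(1-y)$. The hypotheses supply exactly this: since $p_i \in [\ell,u]$ with $\ell > 0$, the eigenvalues of $\bI_n - u^{-1}\bR$ equal $1 - p_i/u \in [0,\,1-\ell/u] \subset [0,1)$, whence the spectral radius of $\bI_n - u^{-1}\bR$ is at most $1 - \ell/u < 1$ and each scalar series converges, in fact geometrically. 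I would stress that strict positive definiteness ($\ell > 0$) is essential: it is precisely what keeps every $y_i$ bounded away from the divergent endpoint $y = 1$, which is why the surrounding theorem assumes all probabilities are bounded below by $\ell > 0$.
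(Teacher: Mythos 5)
Your proposal is correct and follows essentially the same route as the paper's own proof: split off the $\ln u$ term via $\bR = u\,(u^{-1}\bR)$, use the unit-trace hypothesis to reduce it to $\ln u^{-1}$, expand $\ln(\bI_n - (\bI_n - u^{-1}\bR))$ as the Taylor series $-\sum_{k\ge 1}(\bI_n - u^{-1}\bR)^k/k$, and justify convergence by noting the eigenvalues of $\bI_n - u^{-1}\bR$ lie in $[0, 1-\ell/u]$ with $1-\ell/u < 1$. Your additional remarks on simultaneous diagonalizability and the role of $\ell>0$ are consistent with, and slightly more explicit than, the paper's treatment.
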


\begin{proof}
From the definition of the von Neumann entropy and a Taylor expansion,
\begin{eqnarray}
\nonumber\entropy{\bR}&=&
-\trace{\bR \ln \left(uu^{-1}\bR\right)}\\
\nonumber &=& -\trace{(\ln u)\bR} -\trace{\bR \ln(\bI_n - (\bI_n - u^{-1}\bR))}\\
\label{eqn:PD1}&=& \ln u^{-1}-\trace{ -\bR\sum_{k=1}^{\infty}  \frac{(\bI_n - u^{-1}\bR)^k}{k}} \\
\nonumber&=& \ln u^{-1}+\sum_{k=1}^{\infty} \frac{\trace{\bR (\bI_n - u^{-1}\bR)^k}}{k}.
\end{eqnarray}
Eqn.~(\ref{eqn:PD1}) follows since $\bR$ has unit trace and from a Taylor expansion: indeed, $\ln(\bI_n - \bA) = -\sum_{k=1}^\infty \bA^k/k$ for a symmetric matrix $\bA$ whose eigenvalues are all in the interval $(-1,1)$. We note that the eigenvalues of $\bI_n-u^{-1}\bR$ are in the interval $[0,1-(\ell/u)]$, whose upper bound is strictly less than one since, by our assumptions, $\ell/u>0$.
\end{proof}

We now proceed to prove Theorem~\ref{thm:taylor}. We will condition our analysis on Algorithm~\ref{alg:power} being successful, which happens with probability at least $1-\delta$. In this case, $u = \min\{1,6\tilde{p}_1\}$ is an upper bound for all probabilities $p_i$.
For notational convenience, set $\bC = \bI_n-u^{-1}\bR$. We start by manipulating $\Delta=\abs{\aentropy{\bR} -
		\entropy{\bR}}$ as follows:
\begin{align*}
\Delta &=  \abs{  \sum_{k=1}^{m}\frac{1}{k} \cdot \frac{1}{s} \sum_{i=1}^{s}\bg_i^\top \bR\bC^k \bg_i  - \sum_{k=1}^{\infty}\frac{1}{k} \trace{\bR\bC^k}}    \\
&\le\abs{  \sum_{k=1}^{m}\frac{1}{k}\cdot\frac{1}{s} \sum_{i=1}^{s}    \bg_i^\top \bR\bC^k \bg_i-\sum_{k=1}^{m}\frac{1}{k} \trace{\bR\bC^k} }+\abs{ \sum_{k=m+1}^{\infty}\frac{1}{k} \trace{\bR\bC^k}} \\
&=\underbrace{\abs{  \frac{1}{s} \sum_{i=1}^{s} \bg_i^\top \left(\sum_{k=1}^{m} \bR\bC^k/k  \right) \bg_i  - \trace{\sum_{k=1}^{m}\frac{1}{k} \bR\bC^k }}}_{\Delta_1}+\underbrace{\abs{ \sum_{k=m+1}^{\infty} \trace{\bR\bC^k}/k}}_{\Delta_2}.
\end{align*}
	\noindent We now bound the two terms $\Delta_1$ and $\Delta_2$ separately. We start with $\Delta_1$:
	the idea is to apply Lemma~\ref{thm:trace} on the matrix $\sum_{k=1}^{m} \bR\bC^k/k$ with
	$s = \left\lceil 20 \ln(2/\delta) / \epsilon^2 \right \rceil$. Hence, with probability at least $1-\delta$:
\begin{equation}\label{eqn:d1}
	\Delta_1 \le \epsilon \cdot \trace{\sum_{k=1}^{m} \bR\bC^k/k} \le \epsilon \cdot \trace{\sum_{k=1}^{\infty} \bR\bC^k/k}.
\end{equation}
A subtle point in applying Lemma~\ref{thm:trace} is that the matrix $\sum_{k=1}^{m} \bR\bC^k/k$ must be symmetric positive semidefinite. To prove this, let the SVD of $\bR$ be $\bR = \bPsi \bSigma_p \bPsi^T$, where all three matrices are in $\mathbb{R}^{n \times n}$ and the diagonal entries of $\bSigma_p$ are in the interval $[\ell,u]$. Then, it is easy to see that $\bC = \bI_n - u^{-1}\bR = \bPsi(\bI_n-u^{-1}\bSigma_p)\bPsi^T$ and $\bR\bC^k = \bPsi\bSigma_p(\bI_n-u^{-1}\bSigma_p)^k\bPsi^T$, where the diagonal entries of $\bI_n-u^{-1}\bSigma_p$ are non-negative, since the largest entry in $\bSigma_p$ is upper bounded by $u$. This proves that $\bR\bC^k$ is symmetric positive semidefinite for any $k$, a fact which will be useful throughout the proof. Now,
$$\sum_{k=1}^m \bR\bC^k/k = \bPsi \left(\bSigma_p \sum_{k=1}^m (\bI_n-u^{-1}\bSigma_p)^k/k\right)\bPsi^T,$$
which shows that the matrix of interest is symmetric positive semidefinite. Additionally, since $\bR\bC^k$
is symmetric positive semidefinite, its trace is non-negative, which proves the second inequality in eqn.~(\ref{eqn:d1}) as well.

We proceed to bound $\Delta_2$ as follows:
\begin{align}
\nonumber \Delta_2 &= \abs{\sum_{k=m+1}^{\infty} \trace{\bR\bC^k}/k }
= \abs{\sum_{k=m+1}^{\infty} \trace{\bR\bC^m \bC^{k-m}}/k}\\
\label{eqn:PD21}&= {\abs{\sum_{k=m+1}^{\infty} \trace{\bC^m  \bC^{k-m}\bR}/k}
	\le \abs{\sum_{k=m+1}^{\infty} \TNorm{\bC^m} \cdot \trace{\bC^{k-m}\bR}/k}}\\
\label{eqn:PD22}&= {\TNorm{\bC^m} \cdot \abs{\sum_{k=m+1}^{\infty}\trace{ \bR\bC^{k-m}}/k}
	\le \TNorm{\bC^{m}} \cdot \abs{\sum_{k=1}^{\infty} \trace{\bR\bC^k}/k }}\\
\label{eqn:PD24} &\le \left(1-\frac{\ell}{u}\right)^m \sum_{k=1}^{\infty} \trace{\bR\bC^k}/k.
\end{align}
To prove eqn.~(\ref{eqn:PD21}), we used von Neumann's trace inequality\footnote{Indeed, for any two matrices $\bA$ and $\bB$, $\trace{\bA\bB} \leq \sum_i \sigma_i(\bA)\sigma_i(\bB)$, where $\sigma_i(\bA)$ (respectively $\sigma_i(\bB)$) denotes the $i$-th singular value of $\bA$ (respectively $\bB$). Let $\TNorm{\cdot}$ to denote the induced-2 matrix or spectral norm, then $\TNorm{\bA} = \sigma_1(\bA)$ (its largest singular value). Given that each singular value of $\bA$ is upper bounded by $\sigma_1(\bA)$ then we can rewrite $\trace{\bA\bB} \leq \TNorm{\bA}\sum_i \sigma_i(\bB)$; if $\bB$ is symmetric positive semidefinite, $\trace{\bB}=\sum_i \sigma_i(\bB)$.}. Eqn.~\eqref{eqn:PD21} now follows since $\bC^{k-m}\bR$ is symmetric positive semidefinite\footnote{This can be proven using an argument similar to the one used to prove eqn.~(\ref{eqn:d1}).}. To prove eqn.~(\ref{eqn:PD22}), we used the fact that $\trace{\bR\bC^k}/k\geq 0$ for any $k\geq 1$.
Finally, to prove eqn.~(\ref{eqn:PD24}), we used the fact that $\TNorm{\bC} = \TNorm{\bI_n-u^{-1}\bSigma_p} \leq 1-\ell/u$ since the smallest entry in $\Sigma_p$ is at least $\ell$ by our assumptions. We also removed unnecessary absolute values since $\trace{\bR\bC^k}/k$ is non-negative for any positive integer $k$.

Combining the bounds for $\Delta_1$ and $\Delta_2$ gives
	\begin{align*}
	\abs{ \aentropy{\bR} - \entropy{\bR} } & \leq 
	\left(\epsilon+\left(1-\frac{\ell}{u}\right)^m\right) \sum_{k=1}^{\infty} \frac{\trace{\bR\bC^k}}{k}.
	\end{align*}
	\noindent We have already proven in Lemma~\ref{lem1a} that
	$$ \sum_{k=1}^{\infty} \frac{\trace{\bR\bC^k}}{k} \leq \entropy{\bR} - \ln u^{-1} \leq \entropy{\bR},$$
where the last inequality follows since $u\leq 1$. Collecting our results, we get
		\begin{align*}
		\abs{ \aentropy{\bR} - \entropy{\bR} }
		\leq
		\left(\epsilon+\left(1-\frac{\ell}{u}\right)^m\right)  \entropy{\bR}.
		\end{align*}
Setting
$$m=\left\lceil\frac{u}{\ell}\ln{\frac{1}{\epsilon}}\right\rceil$$
and using $\left(1-x^{-1}\right)^x \leq e^{-1}$ ($x>0$),
guarantees that $(1-\ell/u)^m \leq \epsilon$ and concludes the proof of the theorem.
We note that the failure probability of the algorithm is at most $2\delta$ (the sum of the failure probabilities of the power method and the trace estimation algorithm).

Finally, we discuss the running time of Algorithm~\ref{alg1a}, which is equal to $\OO(s\cdot m\cdot \nnz(\bR) )$. Since $s = \OO\left(\frac{\ln(1/\delta)}{\epsilon^{2}}\right)$ and $m=\OO\left(\frac{u\ln(1/\epsilon)}{\ell}\right)$, the running time becomes (after accounting for the running time of Algorithm~\ref{alg:power}) $$\OO\left(\left(\frac{u}{\ell}\cdot\frac{\ln(1/\epsilon)}{\epsilon^2}+\ln(n)\right)\ln(1/\delta)\cdot \nnz(\bR)\right).$$

\section{An approach via Chebyschev polynomials}\label{sxn:cheb}

Our second approach is to use a Chebyschev polynomial-based approximation scheme to estimate the entropy of a density matrix. Our approach follows the work of~\cite{Wihler2014}, but our analysis uses the trace estimators of~\cite{AT11} and Algorithm~\ref{alg:power} and its analysis. Importantly, we present conditions under which the proposed approach is competitive with the approach of Section~\ref{sxn:Taylor}.

\subsection{Algorithm and Main Theorem}

The proposed algorithm leverages the fact that the von Neumann entropy of a density matrix $\bR$ is equal to the (negative) trace of the matrix function $\bR \ln \bR$ and approximates the function $\bR \ln \bR$ by a sum of Chebyschev polynomials; then, the trace of the resulting matrix is estimated using the trace estimator of~\cite{AT11}.

Let $f_m(x) = \sum_{w=0}^m \alpha_w \mathcal{T}_w(x)$ with $\alpha_0 = \frac{u}{2}\left(\ln\frac{u}{4}+1\right)$, $\alpha_1 = \frac{u}{4}\left(2\ln\frac{u}{4}+3\right)$, and $\alpha_w = \frac{(-1)^w u}{w^3-w}$ for $w \geq 2$. Let $\mathcal{T}_w(x) = \cos(w\cdot \arccos ((2/u)x-1))$ and $x \in [0,u]$ be the Chebyschev polynomials of the first kind for any integer $w > 0$. Algorithm~\ref{alg1b} computes $u$ (an upper bound estimate for the largest probability $p_1$ of the density matrix $\bR$) and then computes $f_m(\bR)$ and estimates its trace.
We note that the computation $\bg_i^\top f_m(\bR) \bg_i$ can be done efficiently using Clenshaw's algorithm; see Appendix C for the well-known approach.
\begin{algorithm}[h!]
\begin{algorithmic}[1]
\STATE {\bf{INPUT}}: $\bR \in \mathbb{R}^{n \times n}$, accuracy parameter $\varepsilon > 0$, failure probability $\delta$, and integer $m >0$.
\STATE Compute $\tilde{p}_1$, the estimate of the largest eigenvalue of $\bR$, $p_1$, using Algorithm~\ref{alg:power} (see Appendix) with $t=\OO(\ln n)$ and $q=\OO(\ln(1/\delta))$.
\STATE Set $u=\min\{1,6\tilde{p}_1\}$.
\STATE Set $s = \left\lceil 20 \ln(2/\delta) / \varepsilon^2 \right \rceil$.
\STATE Let $\bg_1,\bg_2,\ldots, \bg_s \in \mathbb{R}^n$ be i.i.d. random Gaussian vectors.
\STATE{\bf{OUTPUT}:
$\aentropy{\bR} = -\frac1{s}\sum_{i=1}^{s} \bg_i^\top f_m(\bR) \bg_i.$
}
\end{algorithmic}
\caption{{A Chebyschev polynomial-based approach to estimate the entropy.}}
\label{alg1b}
\end{algorithm}

\noindent Our main result is an analysis of Algorithm~\ref{alg1b} that guarantees a relative error approximation to the entropy of the density matrix $\bR$, under the assumption that $\bR = \sum_{i=1}^n p_i \bpsi_i \bpsi_i^T  \in \mathbb{R}^{n \times n}$ has $n$ pure states with $0 < \ell \leq p_i$ for all $i=1\ldots n$. The following theorem is our main quality-of-approximation result for Algorithm~\ref{alg1b}.
\begin{theorem}
	\label{thm:cheb}
Let $\bR$ be a density matrix such that all probabilities $p_i$, $i=1\ldots n$ satisfy $0 < \ell \leq p_i$. Let $u$ be computed as in Algorithm~\ref{alg1a} and let $\aentropy{\bR}$ be the output of Algorithm~\ref{alg1b} on inputs $\bR$, $m$, and $\epsilon<1$;
	Then, with probability at least $1-2\delta$,
	$$
	\abs{ \aentropy{\bR} - \entropy{\bR} }  \leq
	3\epsilon \entropy{\bR},$$
by setting  $m = \sqrt{\frac{u}{2\epsilon\ell \ln(1/(1-\ell))}}$. The algorithm runs in time
$$\OO\left(\left(\sqrt{\frac{u}{\ell \ln(1/(1-\ell))}}\cdot\frac{1}{\epsilon^{2.5}}+\ln(n)\right)\ln(1/\delta)\cdot \nnz(\bR)\right).$$
\end{theorem}
The similarities between Theorems~\ref{thm:taylor} and~\ref{thm:cheb} are obvious: same assumptions and directly comparable accuracy guarantees. The only difference is in the running times: the Taylor series approach has a milder dependency on $\epsilon$, while the Chebyschev-based approximation has a milder dependency on the ratio $u/\ell$, which controls the behavior of the probabilities $p_i$. However, for small values of $\ell$ ($\ell \rightarrow 0$),
$$\ln\frac{1}{1-\ell} = \ln\left(1+\frac{\ell}{1-\ell}\right) \approx \frac{\ell}{1-\ell} \approx \ell.$$
Thus, the Chebyschev-based approximation has a milder dependency on $u$ but not necessarily $\ell$ when compared to the Taylor-series approach.  We also note that the discussion following Theorem~\ref{thm:taylor} is again applicable here.

\subsection{Proof of Theorem~\ref{thm:cheb}}

We will condition our analysis on Algorithm~\ref{alg:power} being successful, which happens with probability at least $1-\delta$. In this case, $u = \min\{1,6\tilde{p}_1\}$ is an upper bound for all probabilities $p_i$. We now recall (from Section~\ref{sxn:notation}) the definition of the function $h(x)=x \ln x$ for any real $x \in (0,1]$, with $h(0)=0$. Let $\bR = \bPsi \bSigma_p \bPsi^T \in \mathbb{R}^{n \times n}$ be the density matrix, where both $\bSigma_p$ and $\bPsi$ are matrices in $\mathbb{R}^{n \times n}$. Notice that the diagonal entries of $\bSigma_p$ are the $p_i$s and they satisfy $0 < \ell \leq p_i \leq u \leq 1$ for all $i=1\ldots n$.

Using the definitions of matrix functions from~\cite{Higham:2008}, we can now define $h(\bR) = \bPsi h(\bSigma_p) \bPsi^T$, where $h(\bSigma_p)$ is a diagonal matrix in $\mathbb{R}^{n \times n}$ with entries equal to $h(p_i)$ for all $i=1\ldots n$.
We now restate Proposition 3.1 from~\cite{Wihler2014} in the context of our work, using our notation.
\begin{lemma}
	\label{lem:prior}
The function $h(x)$ in the interval $[0,u]$ can be approximated by
$$f_m(x) = \sum_{w=0}^m \alpha_w \mathcal{T}_w(x),$$
where $\alpha_0 = \frac{u}{2}\left(\ln\frac{u}{4}+1\right)$, $\alpha_1 = \frac{u}{4}\left(2\ln\frac{u}{4}+3\right)$, and  $\alpha_w = \frac{(-1)^w u}{w^3-w}$ for $w \geq 2$.
For any $m\geq 1$, $$\abs{h(x)-f_m(x)}\leq \frac{u}{2m(m+1)}\leq \frac{u}{2m^2},$$ for  $x \in [0,u]$.
\end{lemma}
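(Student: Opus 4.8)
The plan is to recognize $f_m$ as the degree-$m$ truncation of the Chebyshev expansion of $h$ on $[0,u]$ and to prove the two assertions separately: that the stated $\alpha_w$ are genuinely the Chebyshev coefficients of $h$, and that the truncation error is controlled by the tail $\sum_{w>m}\abs{\alpha_w}$. First I would pass to the standard interval via the change of variables $x = \tfrac{u}{2}(y+1)$, so that $y = (2/u)x - 1$ ranges over $[-1,1]$ as $x$ ranges over $[0,u]$ and $\mathcal{T}_w(x) = T_w(y)$ with $T_w(y) = \cos(w\arccos y)$ the ordinary Chebyshev polynomials. Setting $g(y) = h\!\left(\tfrac{u}{2}(y+1)\right) = \tfrac{u}{2}(y+1)\ln\!\left(\tfrac{u}{2}(y+1)\right)$, the first claim is exactly that $\alpha_w$ equals the $w$-th Chebyshev coefficient of $g$, i.e. $\alpha_0 = \tfrac{1}{\pi}\int_{-1}^1 \tfrac{g(y)}{\sqrt{1-y^2}}\,dy$ and $\alpha_w = \tfrac{2}{\pi}\int_{-1}^1 \tfrac{g(y)T_w(y)}{\sqrt{1-y^2}}\,dy$ for $w\geq 1$.

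To evaluate these coefficients I would substitute $y = \cos\theta$ with $\theta\in[0,\pi]$, turning the weighted integrals into Fourier-cosine integrals $\alpha_w = \tfrac{2}{\pi}\int_0^\pi g(\cos\theta)\cos(w\theta)\,d\theta$. Using $\tfrac{u}{2}(1+\cos\theta) = u\cos^2(\theta/2)$, the integrand splits as $g(\cos\theta) = u\cos^2(\theta/2)\ln u + 2u\cos^2(\theta/2)\ln\cos(\theta/2)$. The first piece is an elementary trigonometric polynomial and contributes only to $\alpha_0$ and $\alpha_1$; the second piece I would handle with the classical Fourier series $\ln(2\cos(\theta/2)) = \sum_{k\geq 1}\tfrac{(-1)^{k+1}}{k}\cos(k\theta)$, so that multiplying by $2u\cos^2(\theta/2) = u(1+\cos\theta)$ and extracting the $\cos(w\theta)$-coefficient by orthogonality yields the closed form $\alpha_w = (-1)^w u/(w^3-w)$ for $w\geq 2$, with the boundary coefficients $\alpha_0,\alpha_1$ absorbing the extra $\ln u$ factor and the low-order trigonometric terms. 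I expect this coefficient computation to be the main obstacle: it is where the logarithmic singularity of $g$ at $y=-1$ must be tracked and where all of the algebraic bookkeeping assembling $\alpha_0$ and $\alpha_1$ lives.

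Given the coefficients, the error bound is short. Since $h$ is continuous on $[0,u]$ (with $h(0)=0$) and $\abs{\alpha_w} = u/(w^3-w) = \OO(w^{-3})$, the Chebyshev series converges absolutely and uniformly to $h$, so $h(x) - f_m(x) = \sum_{w=m+1}^\infty \alpha_w \mathcal{T}_w(x)$. Because $\abs{\mathcal{T}_w(x)}\leq 1$ for $x\in[0,u]$, this gives $\abs{h(x)-f_m(x)} \leq \sum_{w=m+1}^\infty \abs{\alpha_w} = u\sum_{w=m+1}^\infty \tfrac{1}{(w-1)w(w+1)}$. Finally I would telescope using $\tfrac{1}{(w-1)w(w+1)} = \tfrac{1}{2}\!\left(\tfrac{1}{(w-1)w} - \tfrac{1}{w(w+1)}\right)$, which collapses the sum to $\tfrac{1}{2m(m+1)}$ and delivers exactly $\abs{h(x)-f_m(x)} \leq \tfrac{u}{2m(m+1)}$; the final inequality $\tfrac{u}{2m(m+1)} \leq \tfrac{u}{2m^2}$ then follows immediately from $m(m+1)\geq m^2$.
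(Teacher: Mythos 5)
Your argument is correct, but note that the paper itself offers no proof of this lemma: it simply restates Proposition~3.1 of the Wihler et al.\ reference and imports the bound wholesale. What you have written is essentially a self-contained reconstruction of the proof in that reference. I checked the key computations: mapping $x=\tfrac{u}{2}(y+1)$ and writing $g(\cos\theta)=u\cos^2(\theta/2)\ln u+2u\cos^2(\theta/2)\ln\cos(\theta/2)$, the product formula $(1+\cos\theta)\cos(k\theta)=\cos(k\theta)+\tfrac12\cos((k+1)\theta)+\tfrac12\cos((k-1)\theta)$ combined with the series for $\ln(2\cos(\theta/2))$ gives, for $w\geq 2$, the coefficient $(-1)^w u\bigl(-\tfrac1w+\tfrac1{2(w-1)}+\tfrac1{2(w+1)}\bigr)=\tfrac{(-1)^w u}{w^3-w}$, and the boundary cases reproduce $\alpha_0=\tfrac u2(\ln\tfrac u4+1)$ and $\alpha_1=\tfrac u4(2\ln\tfrac u4+3)$ exactly as stated; the telescoping of $\sum_{w>m}\tfrac{1}{(w-1)w(w+1)}$ to $\tfrac{1}{2m(m+1)}$ is also right and is where the $m\geq 1$ hypothesis is used. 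The only step you state a bit loosely is the assertion that the Chebyshev series ``converges absolutely and uniformly to $h$'': absolute summability of the coefficients gives uniform convergence of the series to \emph{some} continuous function, and you then need a uniqueness argument (two continuous functions on $[0,\pi]$ with the same Fourier--cosine coefficients coincide) to identify the limit with $g(\cos\theta)$; this is standard but deserves a sentence, since $g$ has a logarithmic singularity in its derivative at $y=-1$ and one cannot appeal to smoothness-based convergence theorems there. With that sentence added, your proof is complete and gives the paper a derivation it currently only cites.
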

In the above, $\mathcal{T}_w(x) = \cos(w\cdot \arccos ((2/u)x-1))$ for any integer $w\geq 0$ and $x \in [0,u]$. Notice that the function $(2/u)x-1$ essentially maps the interval $[0,u]$, which is the interval of interest for the function $h(x)$, to $[-1,1]$, which is the interval over which Chebyschev polynomials are commonly defined. The above theorem exploits the fact that the Chebyschev polynomials form an orthonormal basis for the space of functions over the interval $[-1,1]$.

We now move on to approximate the entropy $\mathcal{H}(\bR)$ using the function $f_m(x)$. First,
\begin{eqnarray}\label{eq:cheb_no_trace}
 \nonumber -\trace{f_m(\bR)}&=&{-\trace{\sum_{w=0}^m \alpha_w \mathcal{T}_w(\bR)}
 	= -\trace{\sum_{w=0}^m \alpha_w \bPsi \mathcal{T}_w(\bSigma_p)\bPsi^T}} \\
 \nonumber &=& {-\sum_{w=0}^m \alpha_w \trace{\mathcal{T}_w(\bSigma_p)}= -\sum_{w=0}^m \alpha_w \sum_{i=1}^n \mathcal{T}_w(p_i)}\\
 &=&-\sum_{i=1}^n \sum_{w=0}^m \alpha_w  \mathcal{T}_w(p_i).
\end{eqnarray}

Recall from Section~\ref{sxn:notation} that $\mathcal{H}(\bR)=-\sum_{i=1}^n h(p_i)$.
We can now bound the difference between $\trace{-f_m(\bR)}$ and $\mathcal{H}(\bR)$. Indeed,
\begin{eqnarray}
\nonumber  \abs{\mathcal{H}(\bR)-\trace{-f_m(\bR)}} &=& \abs{-\sum_{i=1}^n h(p_i)+\sum_{i=1}^n \sum_{w=0}^m \alpha_w  \mathcal{T}_w(p_i)} \\
\nonumber &\leq& \sum_{i=1}^n \abs{h(p_i)-\sum_{w=0}^m \alpha_w  \mathcal{T}_w(p_i)}\\
\label{eqn:PD2}&\leq& \frac{nu}{2m^2}.
\end{eqnarray}
The last inequality follows by the final bound in Lemma~\ref{lem:prior}, since all $p_i$'s are in the interval $[0,u]$. 

Recall that we also assumed that all $p_i$s are lower-bounded by $\ell>0$ and thus
\begin{equation}\label{eqn:PD3}
  \mathcal{H}(\bR) = \sum_{i=1}^n p_i \ln \frac{1}{p_i} \geq n \ell \ln \frac{1}{1-\ell}.
\end{equation}
We note that the upper bound on the $p_i$s follows since the smallest $p_i$ is at least $\ell>0$ and thus the largest $p_i$ cannot exceed $1-\ell < 1$. We note that we cannot use the upper bound $u$ in the above formula, since $u$ could be equal to one; $1-\ell$ is always strictly less than one but it cannot be a priori computed (and thus cannot be used in Algorithm~\ref{alg1b}), since $\ell$ is not a priori known. 

We can now restate the bound of eqn.~(\ref{eqn:PD2}) as follows:
\begin{align}
    \nonumber\abs{\mathcal{H}(\bR)-\trace{-f_m(\bR)}} \leq & \frac{u}{2m^2 \ell \ln (1/(1-\ell))} \mathcal{H}(\bR)\\ 
    \label{eqn:PD4} \leq& \epsilon \mathcal{H}(\bR),
\end{align}
where the last inequality follows by setting
\begin{equation}\label{eqn:PD5}
  m = \sqrt{\frac{u}{2\epsilon\ell \ln(1/(1-\ell))}}.
\end{equation}

Next, we argue that the matrix $-f_m(\bR)$ is symmetric positive semidefinite (under our assumptions) and thus one can apply Lemma~\ref{thm:trace} to estimate its trace. We note that
$$-f_m(\bR) = \bPsi \left(-f_m(\bSigma_p)\right) \bPsi^T,$$
which trivially proves the symmetry of $-f_m(\bR)$ and also shows that its eigenvalues are equal to $-f_m(p_i)$ for all $i=1\ldots n$. We now bound
\begin{eqnarray*}
  \abs{(-f_m(p_i))-p_i \ln \frac{1}{p_i}} &=& \abs{-f_m(p_i)+p_i \ln p_i} \\
   &=& \abs{p_i \ln p_i-f_m(p_i)}\\
   &\leq& \frac{u}{2m^2}
   \leq \epsilon\ell \ln\frac{1}{1-\ell},
\end{eqnarray*}
where the inequalities follow from Lemma~\ref{lem:prior} and our choice for $m$ from eqn.~(\ref{eqn:PD5}). This inequality holds for all $i=1\ldots n$ and implies that
$$-f_m(p_i) \geq p_i \ln \frac{1}{p_i} - \epsilon\ell \ln \frac{1}{1-\ell} \geq (1-\epsilon)\ell \ln\frac{1}{1-\ell},$$
using our upper ($1-\ell<1$) and lower ($\ell>0$) bounds on the $p_i$s. Now $\epsilon \leq 1$ proves that $-f_m(p_i)$ are non-negative for all $i=1\ldots n$ and thus $-f_m(\bR)$ is a symmetric positive semidefinite matrix; it follows that its trace is also non-negative.

We can now apply the trace estimator of Lemma~\ref{thm:trace} to get
\begin{equation}\label{eqn:PD6}
\abs{
\trace{-f_m(\bR)} - \left(-\frac1{s} \sum_{i=1}^{s} \bg_i^\top f_m(\bR) \bg_i\right)} \leq \epsilon \cdot \trace{-f_m(\bR)}.
\end{equation}
For the above bound to hold, we need to set
\begin{equation}\label{eqn:PD7}
    s=\ceil{20\ln(2/\delta)/\epsilon^2}.
\end{equation}

We now conclude as follows:
\begin{eqnarray}
\nonumber \abs{\mathcal{H}(\bR)- \aentropy{\bR}} \leq& \abs{\mathcal{H}(\bR) - \trace{-f_m(\bR)}}+ \abs{\trace{-f_m(\bR)}-\left(-\frac1{s} \sum_{i=1}^{s} \bg_i^\top f_m(\bR) \bg_i\right)}\\
   \nonumber \leq& \epsilon \mathcal{H}(\bR) + \epsilon \trace{-f_m(\bR)}\\
   \nonumber \leq& \epsilon \mathcal{H}(\bR) + \epsilon(1+\epsilon)\mathcal{H}(\bR)\\
   \nonumber \leq & 3\epsilon \mathcal{H}(\bR).
\end{eqnarray}
The first inequality follows by adding and subtracting $-\trace{f_m(\bR)}$ and using sub-additivity of the absolute value; the second inequality follows by eqns.~(\ref{eqn:PD4}) and~(\ref{eqn:PD6}); the third inequality follows again by eqn.~(\ref{eqn:PD4}); and the last inequality follows by using $\epsilon\leq 1$.

We note that the failure probability of the algorithm is at most $2\delta$ (the sum of the failure probabilities of the power method and the trace estimation algorithm).
Finally, we discuss the running time of Algorithm~\ref{alg1b}, which is equal to $\OO(s\cdot m\cdot \nnz(\bR) )$. Using the values for $m$ and $s$ from eqns.~(\ref{eqn:PD5}) and~(\ref{eqn:PD7}), the running time becomes (after accounting for the running time of Algorithm~\ref{alg:power})
$$\OO\left(\left(\sqrt{\frac{u}{\ell \ln(1/(1-\ell))}}\cdot\frac{1}{\epsilon^{2.5}}+\ln(n)\right)\ln(1/\delta)\cdot \nnz(\bR)\right).$$

\subsection{A comparison with the results of~\cite{Wihler2014}}\label{sxn:wihler}

The work of~\cite{Wihler2014} culminates in the error bounds described in Theorem 4.3 (and the ensuing discussion). In our parlance,~\cite{Wihler2014} first derives the error bound of eqn.~(\ref{eqn:PD2}).
It is worth emphasizing that the bound of eqn.~(\ref{eqn:PD2}) holds even if the $p_i$s are not necessarily strictly positive, as assumed by Theorem~\ref{thm:cheb}: the bound holds even if some of the $p_i$s are equal to zero.

Unfortunately, without imposing a lower bound assumption on the $p_i$s it is difficult to get a meaningful error bound and an efficient algorithm. Indeed, the error implied by eqn.~(\ref{eqn:PD2}) (without any assumption on the $p_i$s) necessitates setting $m$ to at least $\Omega(\sqrt{n})$ (perhaps up to a logarithmic factor, as we will discuss shortly). To understand this, note that the entropy of the density matrix $\bR$ ranges between zero and $\ln k$, where $k$ is the rank of the matrix $\bR$, i.e., the number of non-zero $p_i$'s. Clearly, $k \leq n$ and thus $\ln n$ is an upper bound for $\mathcal{H}(\bR)$. Notice that if $\mathcal{H}(\bR)$ is smaller than $n/(2m^2)$, the error bound of eqn.~(\ref{eqn:PD2}) does not even guarantee that the resulting approximation will be positive, which is, of course, meaningless as an approximation to the entropy.

In order to guarantee a relative error bound of the form $\epsilon \mathcal{H}(\bR)$ via eqn.~(\ref{eqn:PD2}), we need to set $m$ to be at least
\begin{equation}\label{eqn:PD31}
m \geq \sqrt\frac{n}{2\epsilon \mathcal{H}(\bR)},
\end{equation}
which even for ``large'' values of $\mathcal{H}(\bR)$ (i.e., values close to the upper bound $\ln n$) still implies that $m$ is $\OO(\epsilon^{-1/2}\sqrt{n/\ln n})$. Even with such a large value for $m$, we are still not done: we need an efficient trace estimation procedure for the matrix $-f_m(\bR)$. While this matrix is always symmetric, it is not necessarily positive or negative semi-definite (unless additional assumptions are imposed on the $p_i$s, like we did in Theorem~\ref{thm:cheb}). 
\section{Approaches for Hermitian Density Matrices}
Hermitian, instead of symmetric, positive definite matrices, frequently arise in quantum mechanics. The analyses of Sections~\ref{sxn:Taylor} and~\ref{sxn:cheb} focus on real density matrices; we now briefly discuss how they can be extended to Hermitian density matrices. Recall that both approaches follow the same algorithmic scheme. First, the dominant eigenvalue of the density matrix is estimated via the power method; a trace estimation follows using Gaussian trace estimators on either the truncated Taylor expansion of a suitable matrix function or on a Chebyshev polynomial approximation of the same matrix function. Interestingly, the Taylor expansions, as well as the Chebyshev polynomial approximations, both work when the input matrix is complex. However, the estimation of the dominant eigenvalue of $\bR$ poses a theoretical difficulty: to the best of our knowledge, there is no known bound for the accuracy of the power method in the case where $\bR$ is complex. Lemma~\ref{lem:power} guarantees relative error approximations to the dominant eigenvalue of real matrices, but we are not aware of any provable relative error bound for the complex case. To avoid this issue we will be using one as a (loose) upper bound for the dominant eigenvalue.

The crucial step in order to guarantee relative error approximations to the entropy of a Hermitian positive definite matrix is to guarantee relative error approximations for the trace of a Hermitian positive definite matrix. Lemma~\ref{thm:trace} assumes symmetric positive semi-definite matrices; we now prove that the same lemma can be applied on Hermitian positive definite matrices to achieve the same guarantees.

\begin{theorem}\label{thm:hermitian1}
	Every Hermitian matrix $\bA\in\comp^{n\times n}$ can be expressed as
	\begin{equation}\label{eq:hermitian1}
	\bA = \bB + i\bC,
	\end{equation}
where $\bB\in\real^{n\times n}$ is symmetric and $\bC\in\real^{n\times n}$ is anti-symmetric (or skew-symmetric).
If $\bA\in\comp^{n\times n}$ is positive semi-definite, then $\bB$ is also positive semi-definite.
\end{theorem}
\begin{proof}
The proof is trivial and uses the fact that for any Hermitian (symmetric) positive semi-definite matrix all eigenvalues are real and greater than zero.
\end{proof}

\begin{theorem}\label{thm:hermitian2}
	The trace of a Hermitian matrix $\bA\in\comp^{n\times n}$ expressed as in eqn.~\eqref{eq:hermitian1} is equal to the trace of its real part:
	\begin{equation*}\label{eq:hermitian2}
	\trace{\bA} = \trace{\bB}.
	\end{equation*}
\end{theorem}
\begin{proof}
Using $\trace{\bA} = \trace{\bA^\top}$, it is easy to see that
\begin{equation*}
\trace{\bA} =  \trace{\bB+i\bC} =  \trace{\bB} + i\trace{\bC}=\trace{\bB^\top}+i\trace{\bC^\top}
= \trace{\bB}.
\end{equation*}
The last equality follows by noticing that the only way for the equality to hold for a skew-symmetric matrix $\bC$ is if $\trace{\bC^\top} = -\trace{\bC^\top}$. This is true only if $\bC$ is the all-zeros matrix.
\end{proof}

In words, Theorem~\ref{thm:hermitian2} states that the trace of a Hermitian matrix equals the trace of its real part. Similarly, Theorem~\ref{thm:hermitian1} states that the real part of a Hermitian positive semi-definite matrix is symmetric positive semi-definite. Combining both theorems we conclude that we can estimate the trace of a Hermitian positive definite matrix up to relative error, using the Gaussian trace estimator of Lemma~\ref{thm:trace} on its real part. Therefore, both approaches generalize to Hermitian positive definite matrices using one as an upper bound instead of $u$ for the dominant eigenvalue. Algorithms~\ref{alg1a_complex} and~\ref{alg1b_complex} are modified versions of Algorithms~\ref{alg1a} and~\ref{alg1b} respectively that work on Hermitian inputs (the function ${\rm Re}(\cdot)$ returns the real part of its argument in an entry-wise manner).
	\begin{algorithm}[H]
		\begin{algorithmic}[1]
			\STATE {\bf{INPUT}}: $\bR \in \comp^{n \times n}$, accuracy parameter $\varepsilon > 0$, failure probability $\delta$, and integer $m >0$.
			\STATE Set $s = \left\lceil 20 \ln(2/\delta) / \varepsilon^2 \right \rceil$.
			\STATE Let $\bg_1,\bg_2,\ldots, \bg_s \in \real^n$ be i.i.d. random Gaussian vectors.
			\STATE \textbf{OUTPUT}: return
			$$\aentropy{\bR} = \frac{1}{s}\sum_{i=1}^{s}\sum_{k=1}^{m}\frac{\bg_i^\top \left({\rm Re}\left[\bR (\bI_n-\bR)^k\right]\right) \bg_i}{k}.$$
		\end{algorithmic}
		\caption{\small{A Taylor series approach to estimate the entropy.}}
		\label{alg1a_complex}
	\end{algorithm}
\begin{algorithm}[h!]
	\begin{algorithmic}[1]
		\STATE {\bf{INPUT}}: $\bR \in \mathbb{C}^{n \times n}$, accuracy parameter $\varepsilon > 0$, failure probability $\delta$, and integer $m >0$.
		\STATE Set $s = \left\lceil 20 \ln(2/\delta) / \varepsilon^2 \right \rceil$.
		\STATE Let $\bg_1,\bg_2,\ldots, \bg_s \in \mathbb{R}^n$ be i.i.d. random Gaussian vectors.
		\STATE{\bf{OUTPUT}:
			$\aentropy{\bR} = -\frac1{s}\sum_{i=1}^{s} \bg_i^\top \left({\rm Re}\left[f_m(\bR)\right]\right) \bg_i.$
		}
	\end{algorithmic}
	\caption{{A Chebyschev polynomial-based approach to estimate the entropy.}}
	\label{alg1b_complex}
\end{algorithm}
\noindent Theorems~\ref{thm:taylor_complex} and~\ref{thm:cheb_complex} are our main quality-of-approximation results for Algorithm~\ref{alg1a_complex} and~\ref{alg1b_complex}.
	\begin{theorem}
		\label{thm:taylor_complex}
		Let $\bR$ be a complex density matrix such that all probabilities $p_i$, $i=1\ldots n$ satisfy $0 < \ell \leq p_i$. Let $\aentropy{\bR}$ be the output of Algorithm~\ref{alg1a_complex} on inputs $\bR$, $m$, and $\epsilon<1$.
		Then, with probability at least $1-\delta$,
		$$
		\abs{ \aentropy{\bR} - \entropy{\bR} }  \leq
		2\epsilon \entropy{\bR},$$
		by setting  $m=\left\lceil\frac{1}{\ell}\ln{\frac{1}{\epsilon}}\right\rceil$.
		The algorithm runs in time
		$$\OO\left(\frac{\ln(1/\epsilon)}{\ell\cdot\epsilon^2}\cdot \ln(1/\delta)\cdot nnz(\bR)\right).$$ %
	\end{theorem}
\begin{theorem}
	\label{thm:cheb_complex}
	Let $\bR$ be a density matrix such that all probabilities $p_i$, $i=1\ldots n$ satisfy $0 < \ell \leq p_i$. Let $\aentropy{\bR}$ be the output of Algorithm~\ref{alg1b_complex} on inputs $\bR$, $m$, and $\epsilon<1$.
	Then, with probability at least $1-\delta$,
	$$
	\abs{ \aentropy{\bR} - \entropy{\bR} }  \leq
	3\epsilon \entropy{\bR},$$
	by setting  $m = \sqrt{\frac{1}{2\epsilon\ell \ln(1/(1-\ell))}}$. The algorithm runs in time
	$$\OO\left(\sqrt{\frac{1}{\ell \ln(1/(1-\ell))}}\cdot\frac{1}{\epsilon^{2.5}}\ln(1/\delta)\cdot \nnz(\bR)\right).$$
\end{theorem}

\section{An approach via random projection matrices}\label{sxn:rp}

Finally, we focus on perhaps the most interesting special case: the setting where at most $k$ (out of $n$, with $k \ll n$) of the probabilities $p_i$ of the density matrix $\bR$ of eqn.~(\ref{eqn:R}) are non-zero. In this setting, we prove that elegant random-projection-based techniques achieve relative error approximations to all probabilities $p_i$, $i=1\ldots k$. The running time of the proposed approach depends on the particular random projection that is used and can be made to depend on the sparsity of the input matrix.

\subsection{Algorithm and Main Theorem}

The proposed algorithm uses a random projection matrix $\bPi$ to create a ``sketch'' of $\bR$ in order to approximate the $p_i$s.
\begin{algorithm}
\centerline{
\caption{Approximating the entropy via random projection matrices}\label{alg:EN}
}
\begin{algorithmic}[1]
\STATE {\bf{INPUT}}: Integer $n$ (dimensions of matrix $\bR$) and integer $k$ (with rank of $\bR$ at most $k \ll n$, see eqn.~(\ref{eqn:R})).
\STATE Construct the random projection matrix $\bPi \in \mathbb{R}^{n \times s}$ (see Section~\ref{sxn:constructpi} for details on $\bPi$ and $s$).
\STATE Compute $\tilde{\bR} = \bR\bPi \in \mathbb{R}^{n \times s}$.
\STATE Compute and return the (at most) $k$ non-zero singular values of $\tilde{\bR}$, denoted by $\tilde{p}_i$, $i=1\ldots k$.
\STATE {\bf{OUTPUT}}: $\tilde{p}_i$, $i=1\ldots k$ and $\aentropy{\bR} = \sum_{i=1}^k \tilde{p}_i \ln \frac{1}{\tilde{p}_i}$.
\end{algorithmic}
\end{algorithm}
In words, Algorithm~\ref{alg:EN} creates a sketch of the input matrix $\bR$ by post-multiplying $\bR$ by a random projection matrix; this is a well-known approach from the RandNLA literature (see~\cite{Drineas2016} for details). Assuming that $\bR$ has rank at most $k$, which is equivalent to assuming that at most $k$ of the probabilities $p_i$ in eqn.~(\ref{eqn:R}) are non-zero (e.g., the system underlying the density matrix $\bR$ has at most $k$ pure states), then the rank of $\bR\bPi$ is also at most $k$. In this setting, Algorithm~\ref{alg:EN} returns the non-zero singular values of $\bR\bPi$ as approximations to the $p_i$, $i=1\ldots k$.

The following theorem is our main quality-of-approximation result for Algorithm~\ref{alg:EN}.
\begin{theorem}
	\label{thm:rp}
Let $\bR$ be a density matrix with at most $k \ll n$ non-zero probabilities and let $\epsilon<1/2$ be an accuracy parameter. Then, with probability at least $0.9$, the output of Algorithm~\ref{alg:EN} satisfies
$$\abs{p_i^2 - \tilde{p}_i^2} \leq \epsilon p_i^2$$
for all $i=1\ldots k$. Additionally,
$$\abs{\mathcal{H}(\bR)-\aentropy{\bR}}\leq \sqrt{\epsilon}\mathcal{H}(\bR)+\sqrt{\frac{3}{2}}\epsilon.$$
Algorithm~\ref{alg:EN} (combined with Algorithm~\ref{alg:IS} below) runs in time
$$\OO\left(\nnz(\bR)+nk^4/\epsilon^4\right).$$
\end{theorem}
\noindent Comparing the above result with Theorems~\ref{thm:taylor} and~\ref{thm:cheb}, we note that the above theorem does not necessitate imposing any constraints on the probabilities $p_i$, $i=1\ldots k$. Instead, it suffices to have $k$ non-zero probabilities. The final result is an additive-relative error approximation to the entropy of $\bR$ (as opposed to the relative error approximations of  Theorems~\ref{thm:taylor} and~\ref{thm:cheb}); under the mild assumption $\mathcal{H}(\bR) \geq \sqrt{\epsilon}$, the above bound becomes a true relative error approximation\footnote{Recall that $\mathcal{H}(\bR)$ ranges between zero and $\ln k$.}.

\subsection{Two constructions for the random projection matrix}\label{sxn:constructpi}

We now discuss two constructions for the matrix $\bPi$ and we cite two bounds regarding these constructions from prior work that will be useful in our analysis. The first construction is the subsampled Hadamard Transform, a simplification of the Fast Johnson-Lindenstrauss Transform of~\cite{Ailon2009}; see~\cite{Drineas2011,Tropp2010} for details. We do note that even though it appears that Algorithm~\ref{alg:IS} is always better than Algorithm~\ref{alg:SHD} (at least in terms of their respective \textit{theoretical} running times), both algorithms are worth evaluating experimentally: in particular, prior work~\cite{Paul2013} has reported that Algorithm~\ref{alg:SHD} often outperforms Algorithm~\ref{alg:IS} in terms of empirical accuracy and running time when the input matrix is dense, as is often the case in our setting. Therefore, we choose to present results (theoretical and empirical) for both well-known constructions of $\bPi$ (Algorithms~\ref{alg:SHD} and~\ref{alg:IS}).
\begin{algorithm}
\centerline{
\caption{The subsampled Randomized Hadamard Transform}\label{alg:SHD}
}
\begin{algorithmic}[1]
\STATE \textbf{INPUT:} integers $n, s>0$ with $s \ll n$.
\STATE Let $\bS$ be an empty matrix.
\STATE \textbf{For} $t=1,\ldots,s$ (i.i.d. trials with replacement) \textbf{select uniformly at random} an integer from $\left\{1,2,\ldots,n\right\}$.
\STATE \textbf{If} $i$ is selected, \textbf{then} append the column vector $\be_i$ to $\bS$, where $\be_i \in \mathbb{R}^n$ is the $i$-th canonical vector.
\STATE Let $\bH \in \mathbb{R}^{n\times n}$ be the normalized Hadamard transform matrix.
\STATE Let $\bD \in \mathbb{R}^{n \times n}$ be a diagonal matrix with
$$
\bD_{ii}
            = \left\{ \begin{array}{ll}
                         +1 & \mbox{, with probability $1/2$} \\
                         -1 & \mbox{, with probability $1/2$} \\
                      \end{array}
              \right.
$$
\STATE \textbf{OUTPUT:} $\bPi =\bD\bH\bS \in \mathbb{R}^{n \times s}$.
\end{algorithmic}
\end{algorithm}

\noindent The following result has appeared in~\cite{Drineas2011,Tropp2010,Woodruff2014}.
\begin{lemma}
	\label{lem:rp1}
Let $\bU \in \mathbb{R}^{n \times k}$ such that $\bU^T\bU=\bI_k$ and let $\bPi \in \mathbb{R}^{n \times s}$ be constructed by Algorithm~\ref{alg:SHD}. Then, with probability at least 0.9,
$$\left\|\frac{n}{k}\bU^T\bPi\bPi^T\bU - \bI_k\right\|_2 \leq \epsilon,$$
by setting $s = \OO\left(\left(k + \log n\right)\cdot \frac{\log k}{\epsilon^2}\right)$.
\end{lemma}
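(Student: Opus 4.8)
The plan is to prove a \emph{subspace embedding} property for the rescaled transform by reducing to uniform sampling of the rows of an orthonormal matrix and then invoking matrix concentration. First I would write $\bPi = \bD\bH\bS$ and set $\bW = \bH\bD\bU \in \mathbb{R}^{n\times k}$; since $\bH^T\bH = \bI_n$ and $\bD^2 = \bI_n$, one checks $\bW^T\bW = \bU^T\bD\bH^T\bH\bD\bU = \bU^T\bU = \bI_k$, so $\bW$ is again orthonormal. Because $\bS$ appends $s$ i.i.d.\ uniformly chosen canonical vectors, $\bS\bS^T = \sum_{t=1}^s \be_{i_t}\be_{i_t}^T$ and hence
$$\bU^T\bPi\bPi^T\bU = \bW^T\bS\bS^T\bW = \sum_{t=1}^s (\bW^T\be_{i_t})(\bW^T\be_{i_t})^T.$$
Taking expectations over the uniform indices gives $\Expect{\bU^T\bPi\bPi^T\bU} = \tfrac{s}{n}\bW^T\bW = \tfrac{s}{n}\bI_k$, so the appropriate rescaling is by $n/s$; I would therefore prove that $\tfrac{n}{s}\sum_t (\bW^T\be_{i_t})(\bW^T\be_{i_t})^T$ concentrates about $\bI_k$ in spectral norm. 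The argument then splits into two stages: a \emph{flattening} bound on the rows of $\bW$, followed by a matrix Chernoff bound for the sampling step.

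For the flattening stage I would control the coherence $\max_j \TNorm{\bW^T\be_j}^2$. The (symmetric) Hadamard column $\bH\be_j$ has all entries equal to $\pm 1/\sqrt n$, so the random signs make $\bD\bH\be_j$ a Rademacher vector scaled by $1/\sqrt n$; hence $\bW^T\be_j = \bU^T\bD\bH\be_j = \tfrac{1}{\sqrt n}\bU^T\bz$ for a sign vector $\bz$, and $\TNorm{\bW^T\be_j}^2 = \tfrac1n\, \bz^T\bU\bU^T\bz$. Since $\bU\bU^T$ is a rank-$k$ projection with $\trace{\bU\bU^T} = k$, $\|\bU\bU^T\|_F = \sqrt k$, and $\TNorm{\bU\bU^T} = 1$, the Hanson--Wright inequality (concentration of quadratic forms in Rademacher variables) gives $\bz^T\bU\bU^T\bz \le C(k + \log(n/\eta))$ with probability $1-\eta$. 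Choosing $\eta$ of order $1/n$ and union bounding over $j = 1,\ldots,n$ yields $\max_j \TNorm{\bW^T\be_j}^2 \le C(k + \log n)/n$ with probability at least, say, $0.95$.

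For the sampling stage I would condition on this coherence event and apply a matrix Chernoff bound (e.g., Tropp~\cite{Tropp2010}) to the i.i.d.\ positive semidefinite summands $\bX_t = \tfrac{n}{s}(\bW^T\be_{i_t})(\bW^T\be_{i_t})^T$. These satisfy $\sum_t \Expect{\bX_t} = \bI_k$ and, by the coherence bound, $\TNorm{\bX_t} \le \tfrac{n}{s}\cdot \tfrac{C(k+\log n)}{n} = \tfrac{C(k+\log n)}{s} =: L$. The matrix Chernoff inequality then bounds both $\lambda_{\max}$ and $\lambda_{\min}$ of $\sum_t \bX_t$ within $1\pm\epsilon$ with failure probability $k\exp(-c\epsilon^2/L)$, so demanding this be a small constant forces $s \gtrsim (k+\log n)\log(k)/\epsilon^2$, exactly the stated sample size. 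A union bound over the flattening event and the concentration event keeps the total failure probability below $0.1$.

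The main obstacle is the flattening bound with the sharp $k + \log n$ dependence rather than the crude $k\log n$ that an entrywise union bound over the $nk$ entries of $\bW$ would produce; obtaining $k + \log n$ requires treating $\TNorm{\bW^T\be_j}^2$ as a single quadratic form and invoking Hanson--Wright (or an equivalent Bernstein-type bound for Rademacher chaos) rather than bounding each coordinate separately. The other point requiring care is the two-source randomness: the coherence event depends only on the signs $\bD$, while the sampling concentration depends on $\bS$, so I would fix the signs first, establish the coherence bound, and only then take probabilities over $\bS$, combining the two failure probabilities at the end.
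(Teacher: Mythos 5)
The paper does not prove this lemma at all: it imports it verbatim from the cited references (\cite{Drineas2011,Tropp2010,Woodruff2014}), so there is no in-paper argument to compare against. Your proposal is, in substance, a correct reconstruction of the standard proof from those sources (in particular Tropp's analysis of the SRHT): reduce to uniform row sampling of the orthonormal matrix $\bW=\bH\bD\bU$, flatten the row norms of $\bW$ to $O((k+\log n)/n)$ using concentration of the quadratic form $\bz^T\bU\bU^T\bz$ in the Rademacher signs (Hanson--Wright, or equivalently Lipschitz concentration of $\TNorm{\bU^T\bz}$), then apply a matrix Chernoff bound to the i.i.d.\ PSD summands, which yields exactly the sample size $s=\OO((k+\log n)\log(k)/\epsilon^2)$ and failure probability $0.1$ after a union bound over the two sources of randomness. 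You are also right to handle the two-stage randomness by conditioning on $\bD$ before taking probabilities over $\bS$, and right that the $k+\log n$ coherence bound needs the quadratic-form treatment rather than an entrywise union bound. One point worth making explicit: your expectation computation shows the correct normalization is $n/s$, not the $n/k$ appearing in the lemma as stated; that is evidently a typo in the paper (the statement is false as written whenever $s\neq k$, and the paper's later use of the lemma speaks of ``rescaling the matrix $\bPi$,'' consistent with the $\sqrt{n/s}$ scaling). Your silent correction is the right one, but you should flag it rather than pass over it.
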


\noindent Our second construction is the input sparsity transform of~\cite{Clarkson2013}. This major breakthrough was further analyzed in~\cite{Meng2013,Nelson2013} and we present the following result from~\cite[Appendix A1]{Meng2013}.

\begin{algorithm}
\centerline{
\caption{An input-sparsity transform}\label{alg:IS}
}
\begin{algorithmic}[1]
\STATE \textbf{INPUT:} integers $n, s>0$ with $s \ll n$.
\STATE Let $\bS$ be an empty matrix.
\STATE \textbf{For} $t=1,\ldots,n$ (i.i.d. trials with replacement) \textbf{select uniformly at random} an integer from $\left\{1,2,\ldots,s\right\}$.
\STATE \textbf{If} $i$ is selected, \textbf{then} append the row vector $\be_i^T$ to $\bS$, where $\be_i \in \mathbb{R}^s$ is the $i$-th canonical vector.
\STATE Let $\bD \in \mathbb{R}^{n \times n}$ be a diagonal matrix with
$$
\bD_{ii}
            = \left\{ \begin{array}{ll}
                         +1 & \mbox{, with probability $1/2$} \\
                         -1 & \mbox{, with probability $1/2$} \\
                      \end{array}
              \right.
$$
\STATE \textbf{OUTPUT:} $\bPi = \bD\bS \in \mathbb{R}^{n \times s}$.
\end{algorithmic}
\end{algorithm}
\begin{lemma}
	\label{lem:rp2}
Let $\bU \in \mathbb{R}^{n \times k}$ such that $\bU^T\bU=\bI_k$ and let $\bPi \in \mathbb{R}^{n \times s}$ be constructed by Algorithm~\ref{alg:IS}. Then, with probability at least 0.9,
$$\|\bU^T\bPi\bPi^T\bU - \bI_k\|_2 \leq \epsilon,$$
by setting $s = \OO(k^2/\epsilon^2)$.
\end{lemma}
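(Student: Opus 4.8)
The plan is to prove the subspace-embedding bound by the second-moment (trace) method. Write $\bM = \bU^T\bPi\bPi^T\bU - \bI_k$; this is a symmetric $k \times k$ matrix, and since $\TNorm{\bM} \le \|\bM\|_F$ it suffices to control $\|\bM\|_F^2$ in expectation and then invoke Markov's inequality. First I would record the combinatorial structure of $\bPi = \bD\bS$: the matrix $\bS$ encodes a uniformly random hash function $h\colon \{1,\dots,n\} \to \{1,\dots,s\}$ (row $i$ of $\bS$ equals $\be_{h(i)}^T$), while $\bD$ supplies independent random signs $\bD_{ii} \in \{\pm 1\}$. A direct computation then gives $(\bPi\bPi^T)_{ij} = \bD_{ii}\bD_{jj}\,\mathbf{1}[h(i)=h(j)]$, so the diagonal entries are identically $1$ and the off-diagonal entries have mean zero (because $\Expect{\bD_{ii}}=0$). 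Hence $\Expect{\bPi\bPi^T} = \bI_n$ and $\Expect{\bM} = \bzero$, which is exactly what makes the second-moment estimate the right tool.

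Next I would expand a generic entry $\bM_{ab}$. Writing $\bu_i^T$ for the $i$-th row of $\bU$, the $i=j$ contribution to $(\bU^T\bPi\bPi^T\bU)_{ab}$ collapses to $(\bU^T\bU)_{ab} = \delta_{ab}$ and exactly cancels the $\bI_k$ term, leaving $\bM_{ab} = \sum_{i\ne j} \bU_{ia}\bU_{jb}\,\bD_{ii}\bD_{jj}\,\mathbf{1}[h(i)=h(j)]$. Squaring and taking expectation over the signs \emph{first} is the crux: in $\bM_{ab}^2$ the product $\bD_{ii}\bD_{jj}\bD_{i'i'}\bD_{j'j'}$ (with $i\ne j$, $i'\ne j'$) has nonzero expectation only when the index pair $\{i',j'\}$ coincides with $\{i,j\}$, i.e. in the two cases $(i',j')=(i,j)$ and $(i',j')=(j,i)$. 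Taking the expectation over $h$ afterwards uses only $\Probab{h(i)=h(j)} = 1/s$ for $i\ne j$. Summing the two surviving cases over $a,b$ yields the clean identity
\[
\Expect{\|\bM\|_F^2} = \frac{1}{s}\sum_{i\ne j}\left(\|\bu_i\|_2^2\,\|\bu_j\|_2^2 + (\bu_i^T\bu_j)^2\right).
\]

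Finally I would bound the two sums using the orthonormality of $\bU$. For the first, $\sum_{i\ne j}\|\bu_i\|_2^2\|\bu_j\|_2^2 \le \left(\sum_i \|\bu_i\|_2^2\right)^2 = \|\bU\|_F^4 = k^2$, since $\|\bU\|_F^2 = \trace{\bU^T\bU} = k$. For the second, $\sum_{i,j}(\bu_i^T\bu_j)^2 = \|\bU\bU^T\|_F^2 = \trace{(\bU\bU^T)^2} = \trace{\bU\bU^T} = k$, again using $\bU^T\bU = \bI_k$. Together these give $\Expect{\|\bM\|_F^2} \le (k^2+k)/s \le 2k^2/s$, so Markov's inequality yields $\Probab{\TNorm{\bM} > \epsilon} \le \Probab{\|\bM\|_F^2 > \epsilon^2} \le 2k^2/(s\epsilon^2)$; choosing $s = \lceil 20k^2/\epsilon^2 \rceil = \OO(k^2/\epsilon^2)$ drives the failure probability below $0.1$, which is the claim. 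The one genuinely delicate step is the sign-pairing expectation in the second paragraph; everything else is bookkeeping, and the Frobenius-to-operator-norm slack is precisely what forces the $k^2$ (rather than $k$) dependence in $s$.
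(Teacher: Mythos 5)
Your proof is correct, and every step checks out: the cancellation of the diagonal terms against $\bI_k$, the sign-pairing computation (only $(i',j')=(i,j)$ and $(i',j')=(j,i)$ survive, each contributing a factor $\Probab{h(i)=h(j)}=1/s$), the identity $\Expect{\FNorm{\bM}^2}=\frac{1}{s}\sum_{i\ne j}\bigl(\TNorm{\bu_i}^2\TNorm{\bu_j}^2+(\bu_i^T\bu_j)^2\bigr)$, and the bounds $\FNorm{\bU}^4=k^2$ and $\FNorm{\bU\bU^T}^2=\trace{\bU\bU^T}=k$, giving failure probability $2k^2/(s\epsilon^2)\le 0.1$ for $s=\lceil 20k^2/\epsilon^2\rceil$. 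One point of comparison: the paper does not prove this lemma at all — it imports it verbatim from prior work, citing Appendix A1 of the Meng--Mahoney analysis of the Clarkson--Woodruff input-sparsity (CountSketch) transform. Your argument is essentially the standard second-moment proof from that literature, so you have in effect reconstructed the cited proof rather than found a new route; what your write-up buys is self-containedness and transparency about where the $k^2$ (rather than $k$) dependence comes from, namely the slack in $\TNorm{\bM}\le\FNorm{\bM}$ before Markov. A small observation worth adding if you keep this proof: your computation only ever uses pairwise independence of the hash values $h(i)$ and four-wise independence of the signs $\bD_{ii}$, so the lemma holds under these weaker (and implementable with $O(\log n)$ seed) assumptions, which is exactly the form in which the cited works state it.
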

\noindent We refer the interested reader to~\cite{Nelson2013} for improved analyses of Algorithm~\ref{alg:IS} and its variants.

\subsection{Proof of Theorem~\ref{thm:rp}}

At the heart of the proof of Theorem~\ref{thm:rp} lies the following perturbation bound from~\cite{Demmel1992} (Theorem 2.3).
\begin{theorem}
	\label{thm:DV}
Let $\bD \bA \bD$ be a symmetric positive definite matrix such that $\bD$ is a diagonal matrix and $\bA_{ii}=1$ for all $i$. Let $\bD\bE \bD$ be a perturbation matrix  such that $\TNorm{\bE}< \lambda_{\min}(\bA)$. Let $\lambda_i$ be the $i$-the eigenvalue of $\bD \bA \bD$ and let $\lambda_i'$ be the $i$-th eigenvalue of $\bD (\bA+\bE) \bD$. Then, for all $i$,
$$ \abs{\lambda_i-\lambda_i'} \leq \frac{\TNorm{\bE}}{\lambda_{\min}(\bA)}.$$
\end{theorem}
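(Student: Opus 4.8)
The plan is to prove the eigenvalue perturbation bound by reducing the scaled perturbation $\bD\bE\bD$ to a \emph{congruence} of the unperturbed matrix $\bD\bA\bD$ and then invoking Ostrowski's theorem. First I would observe that $\bD$ must be nonsingular: since $\bD\bA\bD$ is positive definite and $\bA_{ii}=1$, a zero diagonal entry of $\bD$ would force a zero row and column in $\bD\bA\bD$, contradicting positive definiteness. Consequently $\bA=\bD^{-1}(\bD\bA\bD)\bD^{-1}$ is symmetric positive definite (a congruence of an SPD matrix), so $\bA^{1/2}$ and $\bA^{-1/2}$ exist. Setting $\bM=\bA^{1/2}\bD$ gives $\bD\bA\bD=\bM^\top\bM$, and factoring $\bA+\bE=\bA^{1/2}(\bI+\mathbf{F})\bA^{1/2}$ with $\mathbf{F}=\bA^{-1/2}\bE\bA^{-1/2}$ gives $\bD(\bA+\bE)\bD=\bM^\top(\bI+\mathbf{F})\bM$.

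Second, I would control $\mathbf{F}$. Since $\TNorm{\bA^{-1/2}}^2=\TNorm{\bA^{-1}}=1/\lambda_{\min}(\bA)$, submultiplicativity yields $\TNorm{\mathbf{F}}\le \TNorm{\bE}/\lambda_{\min}(\bA)$, which is strictly less than $1$ by the hypothesis $\TNorm{\bE}<\lambda_{\min}(\bA)$. Hence $\bI+\mathbf{F}$ is symmetric positive definite, its square root $(\bI+\mathbf{F})^{1/2}$ exists, and every eigenvalue of $\bI+\mathbf{F}$ lies in $[1-\TNorm{\mathbf{F}},\,1+\TNorm{\mathbf{F}}]$.

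The central step is the eigenvalue comparison, where I would use twice that $\bX^\top\bX$ and $\bX\bX^\top$ share eigenvalues. Applied to $\bM$, the eigenvalues $\lambda_i$ of $\bD\bA\bD=\bM^\top\bM$ equal those of $\bB:=\bM\bM^\top$. Applied to $\bN:=(\bI+\mathbf{F})^{1/2}\bM$, the eigenvalues $\lambda_i'$ of $\bD(\bA+\bE)\bD=\bM^\top(\bI+\mathbf{F})\bM=\bN^\top\bN$ equal those of $\bN\bN^\top=(\bI+\mathbf{F})^{1/2}\bB(\bI+\mathbf{F})^{1/2}$. The latter is exactly a congruence $\bS\bB\bS^\top$ of $\bB$ with $\bS=(\bI+\mathbf{F})^{1/2}$ nonsingular and $\bS\bS^\top=\bI+\mathbf{F}$. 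Ostrowski's theorem then supplies, for each $i$ in the common sorted order, a factor $\theta_i\in[\lambda_{\min}(\bI+\mathbf{F}),\lambda_{\max}(\bI+\mathbf{F})]$ with $\lambda_i'=\theta_i\lambda_i$. Since $\abs{\theta_i-1}\le\TNorm{\mathbf{F}}$, I obtain $\abs{\lambda_i-\lambda_i'}=\abs{\theta_i-1}\,\lambda_i\le\TNorm{\mathbf{F}}\,\lambda_i\le\frac{\TNorm{\bE}}{\lambda_{\min}(\bA)}\,\lambda_i$, i.e. the \emph{relative} form $\abs{\lambda_i-\lambda_i'}/\lambda_i\le\TNorm{\bE}/\lambda_{\min}(\bA)$ of the stated bound.

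The main obstacle is the bookkeeping in this last step: Ostrowski's theorem must be applied with a \emph{consistent} ordering so that $\theta_i$ pairs the $i$-th eigenvalue of $\bN\bN^\top$ with the $i$-th eigenvalue of $\bB$; this monotone pairing (equivalently, a Courant–Fischer min-max comparison of the two congruent quadratic forms) is the only delicate point. I would also remark that the unit-diagonal hypothesis $\bA_{ii}=1$ is what makes the split $\bD\bA\bD$ canonical (one recovers $\bD_{ii}=\sqrt{(\bD\bA\bD)_{ii}}$) rather than being used in the inequality itself, which is driven entirely by the congruence and the bound on $\TNorm{\mathbf{F}}$. A $1\times1$ check ($\bA=[1]$, $\bD=[d]$, $\bE=[e]$) gives $\abs{\lambda_i-\lambda_i'}=\TNorm{\bE}\lambda_i/\lambda_{\min}(\bA)$ with equality, confirming both that the correct normalization is relative and that the bound is tight.
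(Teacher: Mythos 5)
Your proof is correct, but there is no internal proof to compare it against: the paper states Theorem~\ref{thm:DV} without proof, importing it verbatim from Theorem~2.3 of~\cite{Demmel1992}, and the argument you reconstruct is, up to reorganization, the original one. Demmel and Veseli\'c factor the perturbed matrix as $H^{1/2}(\bI_n+\mathbf{G})H^{1/2}$ with $H=\bD\bA\bD$ and $\mathbf{G}=H^{-1/2}\bD\bE\bD H^{-1/2}$, bound $\TNorm{\mathbf{G}}\le\TNorm{\bE}\cdot\TNorm{\bD H^{-1}\bD}=\TNorm{\bE}/\lambda_{\min}(\bA)$ using $\bD H^{-1}\bD=\bA^{-1}$, and finish with the same swap-plus-Ostrowski step you perform; your variant takes the square root of $\bA$ rather than of $H$, and your $\mathbf{F}=\bA^{-1/2}\bE\bA^{-1/2}$ is in fact orthogonally similar to their $\mathbf{G}$ (conjugate by $W=H^{-1/2}\bD\bA^{1/2}$, which satisfies $WW^\top=\bI_n$), so the two computations are interchangeable. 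The genuinely valuable observation in your writeup is the normalization: as typeset in the paper, the conclusion $\abs{\lambda_i-\lambda_i'}\le\TNorm{\bE}/\lambda_{\min}(\bA)$ is missing the factor $\abs{\lambda_i}$ on the right-hand side and is false as stated --- your $1\times 1$ example with $\abs{d}$ large already refutes it --- whereas the relative form you prove, $\abs{\lambda_i-\lambda_i'}\le\left(\TNorm{\bE}/\lambda_{\min}(\bA)\right)\lambda_i$, is what~\cite{Demmel1992} actually asserts and is exactly what the paper needs to derive eqn.~(\ref{eqn:PD10}), $\abs{p_i^2-\tilde{p}_i^2}\le\epsilon p_i^2$, in the proof of Theorem~\ref{thm:rp}. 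Two small points to tighten: state explicitly that $\bE$ is symmetric (implicit in $\bD(\bA+\bE)\bD$ being a symmetric perturbation, and needed both for $(\bI_n+\mathbf{F})^{1/2}$ to exist and for Ostrowski to apply), and note that the standard statement of Ostrowski's theorem (e.g., Horn and Johnson, Theorem~4.5.9) already pairs the $i$-th eigenvalues of $\bS\bB\bS^\top$ and $\bB$ in a common monotone ordering, so the bookkeeping you flag as the delicate point is handled by the theorem itself.
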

\noindent We note that ${\lambda_{\min}(\bA)}$ in the above theorem is a real, strictly positive number\footnote{This follows from the fact that $\bA$ is a symmetric positive definite matrix and the inequality $0\leq\TNorm{\bE}< \lambda_{\min}(\bA)$.}. Now consider the matrix $\bR \bPi \bPi^T \bR^T$; we will use the above theorem to argue that its singular values are good approximations to the singular values of the matrix $\bR \bR^T$. Recall that $\bR = \bPsi \bSigma_p \bPsi^T$ where $\bPsi$ has orthonormal columns. Note that the eigenvalues of $\bR \bR^T = \bPsi \bSigma_p^2 \bPsi^T$ are equal to the eigenvalues of the matrix $\bSigma_p^2$; similarly, the eigenvalues of
$\bPsi \bSigma_p \bPsi^T \bPi \bPi^T \bPsi \bSigma_p \bPsi^T$ are equal to the eigenvalues of $\bSigma_p \bPsi^T \bPi \bPi^T \bPsi \bSigma_p$. Thus, we can compare the matrices
$$\bSigma_p \bI_k \bSigma_p \quad \mbox{and} \quad \bSigma_p \bPsi^T \bPi \bPi^T \bPsi \bSigma_p.$$
In the parlance of Theorem~\ref{thm:DV}, $\bE=\bPsi^T \bPi \bPi^T \bPsi-\bI_k$. Applying either Lemma~\ref{lem:rp1} (after rescaling the matrix $\bPi$) or Lemma~\ref{lem:rp2}, we immediately get that $\TNorm{\bE_{A}}\leq \epsilon <1$ with probability at least 0.9. Since $\lambda_{\min}(\bI_k)=1$, the assumption of Theorem~\ref{thm:DV} is satisfied. We note that the eigenvalues of $\bSigma_p \bI_k \bSigma_p$ are equal to $p_i^2$ for $i=1\ldots k$ (all positive, which guarantees that the matrix $\bSigma_p \bI_k \bSigma_p$ is symmetric positive definite, as mandated by Theorem~\ref{thm:DV}) and the eigenvalues of $\bSigma_p \bPsi^T \bPi \bPi^T \bPsi \bSigma_p$ are equal to $\tilde{p}_i^2$, where $\tilde{p}_i$ are the singular values of $\bSigma_p \bPsi^T \bPi$. (Note that these are exactly equal to the outputs returned by Algorithm~\ref{alg:EN}, since the singular values of $\bSigma_p \bPsi^T \bPi$ are equal to the singular values of $\bPsi\bSigma_p \bPsi^T \bPi=\bR\bPi$). Thus, we can conclude:
\begin{equation}\label{eqn:PD10}
  \abs{p_i^2 - \tilde{p}_i^2} \leq \epsilon p_i^2.
\end{equation}
The above result guarantees that all $p_i$s can be approximated up to \textit{relative error} using Algorithm~\ref{alg:EN}. We now investigate the implication of the above bound to approximating the von Neumann entropy of $\bR$. Indeed,
\begin{eqnarray*}
\sum_{i=1}^k \tilde{p}_i \ln \frac{1}{\tilde{p}_i} &\leq&  \sum_{i=1}^k (1+\epsilon)^{1/2} p_i \ln \frac{1}{(1-\epsilon)^{1/2}p_i}\\
&\leq& {(1+\epsilon)^{1/2}\left(\sum_{i=1}^k p_i \ln \frac{1}{p_i}+\sum_{i=1}^k p_i \ln \frac{1}{(1-\epsilon)^{1/2}}\right)}\\
&=& (1+\epsilon)^{1/2}\mathcal{H}(\bR)+\frac{\sqrt{1+\epsilon}}{2}\ln \frac{1}{1-\epsilon}\\
&\leq& (1+\epsilon)^{1/2}\mathcal{H}(\bR)+\frac{\sqrt{1+\epsilon}}{2}\ln (1+2\epsilon)\\
&\leq& (1+\sqrt{\epsilon})\mathcal{H}(\bR)+\sqrt{\frac 3 2}\epsilon.
\end{eqnarray*}
In the second to last inequality we used $1/(1-\epsilon) \leq 1+2\epsilon$ for any $\epsilon \leq 1/2$ and in the last inequality we used $\ln(1+2\epsilon)\leq 2\epsilon$ for $\epsilon\in (0,1/2)$. Similarly, we can prove that:
$$\sum_{i=1}^k \tilde{p}_i \ln \frac{1}{\tilde{p}_i} \geq (1-\sqrt{\epsilon})\mathcal{H}(\bR)-\frac{1}{2}\epsilon.$$
Combining, we get
$$\abs{\sum_{i=1}^k \tilde{p}_i \ln \frac{1}{\tilde{p}_i} - \mathcal{H}(\bR)}\leq \sqrt{\epsilon}\mathcal{H}(\bR)+\sqrt{\frac{3}{2}}\epsilon.$$
We conclude by discussing the running time of Algorithm~\ref{alg:EN}. Theoretically, the best choice is to combine the matrix $\bPi$ from Algorithm~\ref{alg:IS} with Algorithm~\ref{alg:EN}, which results in a running time
$$\OO\left(\nnz(\bR)+nk^4/\epsilon^4\right).$$
\subsection{The Hermitian case}
The above approach via random projections critically depends on Lemmas~\ref{lem:rp1} and~\ref{lem:rp2}, which, to the best of our knowledge, have only been proven for the real case. These results are typically proven using matrix concentration inequalities, which are well-explored for sums of random real matrices but less explored for sums of real complex matrices. We leave it as an open problem to extend the theoretical analysis of our approach to the Hermitian case.

\section{Experiments}\label{sxn:experiments}

In this section we report experimental results in order to demonstrate the practical efficiency of our algorithms. We show that our algorithms are \textit{both} numerically accurate \textit{and} computationally efficient. Our algorithms were implemented in Matlab R2016a on a compute node with two 10-Core Intel Xeon-E5 processors (2.60GHz) and 512 GBs of RAM.

We generated random density matrices for most of which we used the QETLAB Matlab toolbox~\cite{qetlab} to derive (real-valued) density matrices of size $5,000 \times 5,000$, on which most of our extensive evaluations were run. We also tested our methods on a much larger $30,000 \times 30,000$ density matrix, which was close to the largest matrix that Matlab would allow us to load. We used the function~\texttt{RandomDensityMatrix} of QETLAB and the Haar measure; we also experimented with the Bures measure to generate random matrices, but we did not observe any qualitative differences worth reporting. Recall that exactly computing the Von-Neumann entropy using eqn.~\eqref{eqn:earlyR} presumes knowledge of the entire spectrum of the matrix; to compute all singular values of a matrix we used the~\texttt{svd} function of Matlab. The accuracy of our proposed approximation algorithms was evaluated by measuring the relative error; wall-clock times were reported in order to quantify the speedup that our approximation algorithms were able to achieve.

\subsection{Empirical results for the Taylor and Chebyshev approximation algorithms}

We start by reporting results on the Taylor and Chebyshev approximation algorithms, which have two sources of error: the number of terms that are retained in either the Taylor series expansion or the Chebyshev polynomial approximation \textit{and} the trace estimation that is used in both approximation algorithms. We will separately evaluate the accuracy loss that is contributed by each source of error in order to understand the behavior of the proposed approximation algorithms.

Consider a $5,000 \times 5,000$ random density matrix and let $m$ (the number of terms retained in the Taylor series approximation or the degree of the polynomial used in the Chebyshev polynomial approximation) range between five and 30 in increments of five. Let $s$, the number of random Gaussian vectors used to estimate the trace, be set to $\{50, 100, 200, 300\}$. Recall that our error bounds for Algorithms~\ref{alg1a} and~\ref{alg1b} depend on $u$, an estimate for the largest eigenvalue of the density matrix. We used the power method to estimate the largest eigenvalue (let $\tilde{\lambda}_{\max}$ be the estimate) and we set $u$ to $\tilde{\lambda}_{\max}$ and $6\tilde{\lambda}_{\max}$. Figures~\ref{fig:RE5K_smax} and~\ref{fig:RE5K_6smax} show the relative error (out of 100\%) for all combinations of $m$, $s$, and $u$ for the Taylor and Chebyshev approximation algorithms. It is worth noting that we also report the error when no trace estimation (NTE) is used in order to highlight that most of the accuracy loss is due to the Taylor/Chebyshev approximation and not the trace estimation.

We observe that the relative error is always small, typically close to $1$-$2\%$, for any choice of the parameters $s$, $m$, and $u$. The Chebyshev algorithm returns better approximations when $u$ is an overestimate for $\lambda_{\max}$ while the two algorithms are comparable (in terms of accuracy) where $u$ is very close to $\lambda_{\max}$, which agrees with our theoretical results. We also note that estimating the largest eigenvalue incurs minimal computational cost (less than one second). The NTE line (no trace estimation) in the plots serves as a lower bound for the relative error. Finally, we note that computing the exact Von-Neumann entropy took approximately $1.5$ minutes for matrices of this size.

\begin{figure}[h!]
	\centering
	\includegraphics[width=0.7\textwidth]{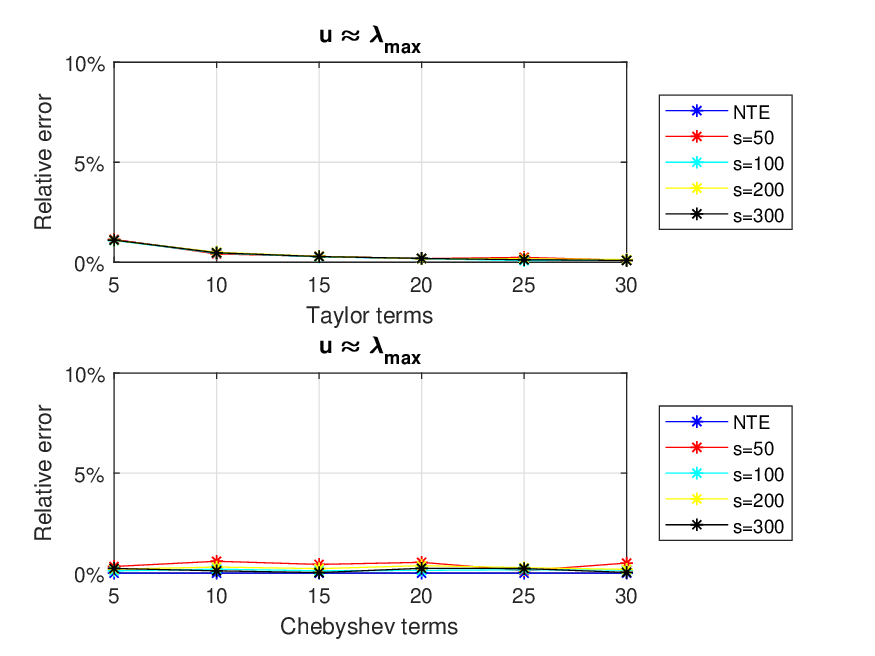}
	\caption{Relative error for $5,000 \times 5,000$ density matrix using the Taylor and the Chebyshev approximation algorithms with $u=\tilde{\lambda}_{\max}$.}
	\label{fig:RE5K_smax}	
\end{figure}

\begin{figure}[h!]
	\centering
	\includegraphics[width=0.7\textwidth]{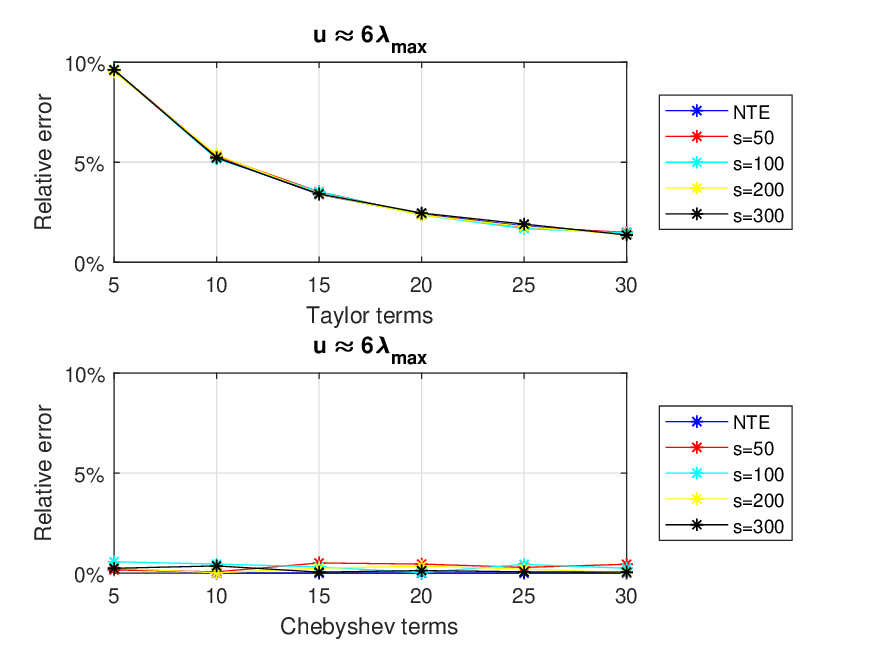}
	\caption{Relative error for $5,000 \times 5,000$ density matrix using the Taylor and the Chebyshev approximation algorithms with $u=6\tilde{\lambda}_{\max}$.}
	\label{fig:RE5K_6smax}
\end{figure}

\begin{figure}[h!]
	\centering
	\includegraphics[width=0.7\textwidth]{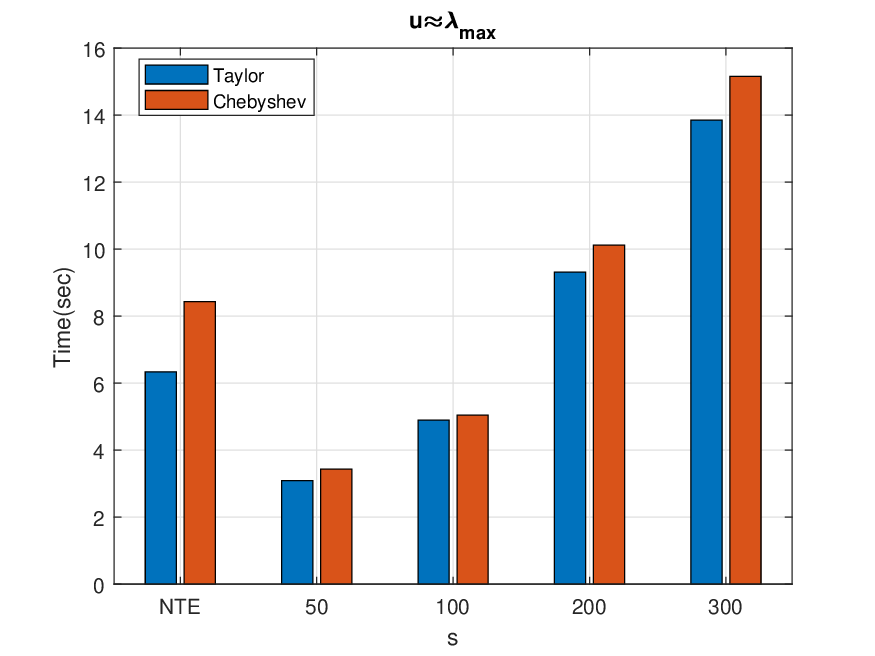}
	\caption{Time (in seconds) to run the approximate algorithms for the $5,000 \times 5,000$ density matrix for $m=5$. \textit{Exactly} computing the Von-Neumann entropy took approximately 90 seconds.}\label{fig:time5K}
\end{figure}

The second dataset that we experimented with was a much larger density matrix of size $30,000 \times 30,000$. This matrix was the largest matrix for which the memory was sufficient to perform operations like the full SVD. Notice that since the increase in the matrix size is six-fold compared to the previous one and SVD's running time grows cubically with the input size, we expect the running time to compute the exact SVD to be roughly $6^3 \cdot 90$ seconds, which is approximately 5.4 hours; indeed, the exact computation of the Von-Neumann entropy took approximately $5.6$ hours. We evaluated both the Taylor and the Chebyshev approximation schemes by setting the parameters $m$ and $s$ to take values in the sets $\{5,10,15,20\}$ and $\{50, 100, 200\}$, respectively. The parameter $u$ was set to $\tilde{\lambda}_{\max}$, where the latter value was computed using the power method, which took approximately $3.6$ minutes. We report the wall-clock running times and relative error (out of 100\%) in Figures~\ref{fig:Time30Ksmax} and~\ref{fig:RE30Ksmax}.

We observe that the relative error is always less than $1\%$ for both methods, with the Chebyshev approximation yielding almost always slightly better results. Note that our Chebyshev-polynomial-based approximation algorithm significantly outperformed the exact computation: e.g., for $m=5$ and $s = 50$, our estimate was computed in less than ten minutes and achieved less than $.2\%$ relative error.

\begin{figure}[!h]
	\centering	\includegraphics[width=0.7\textwidth]{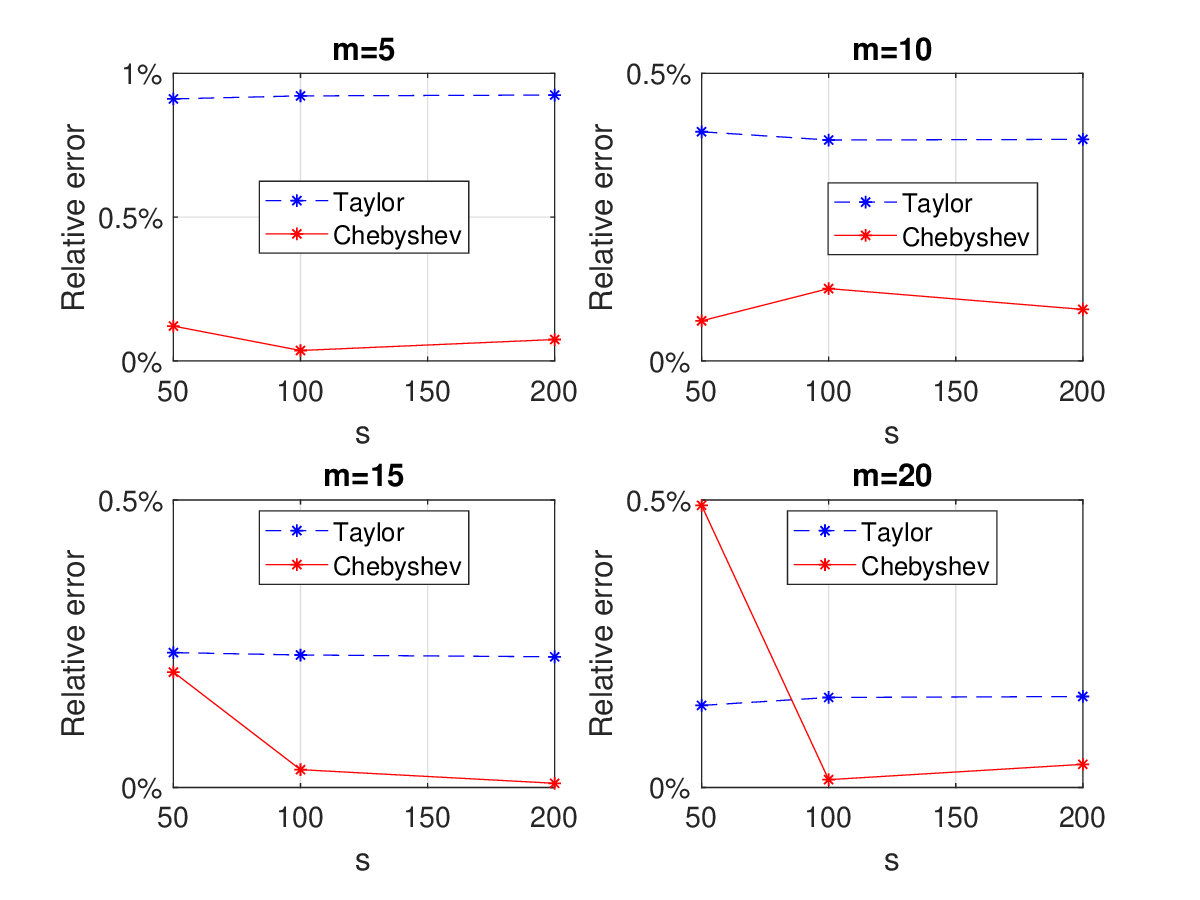}
	\caption{Relative error for $30,000 \times 30,000$ density matrix using the Taylor and the Chebyshev approximation algorithms with $u=\tilde{\lambda}_{\max}$.}
	\label{fig:RE30Ksmax}
\end{figure}

\begin{figure}[!h]
	\centering
	\includegraphics[width=0.7\textwidth]{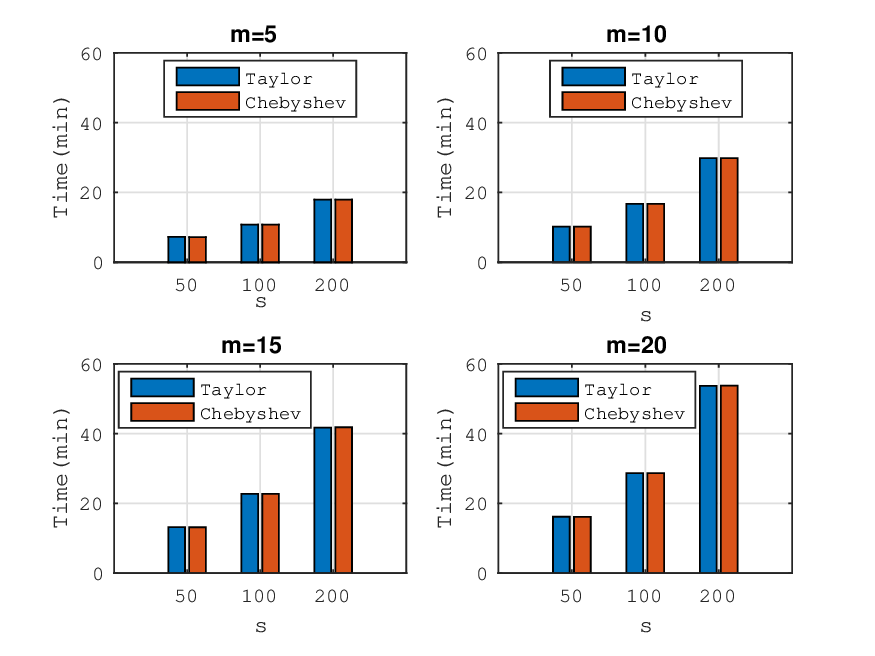}
	\caption{Wall-clock times: Taylor approximation (blue) and Chebyshev approximation (red) for $u=\tilde{\lambda}_{\max}$. Exact computation needed approximately 5.6 hours.}
	\label{fig:Time30Ksmax}
\end{figure}

The third dataset we experimented with was the tridiagonal matrix from \cite[Section 5.1]{Han2015}:
\begin{equation}
\bA = \begin{bmatrix}
2 & -1 & 0 & \dots & 0\\
-1&  2 & -1 & \ddots & \vdots\\
0 & \ddots & \ddots & \ddots & 0\\
\vdots &\ddots & -1 & 2 & -1 \\
0 & \dots & 0 & -1 & 2 \\
\end{bmatrix}\label{eq:poisson_matrix}
\end{equation}
This matrix is the coefficient matrix of the discretized one-dimensional Poisson equation:
$$ f(x) = -\frac{d^2v_x}{dx}$$
defined in the interval $[0,1]$ with Dirichlet boundary conditions $v(0) = v(1) = 0$. We normalize $\bA$ by dividing it with its trace in order to make it a density matrix.
Consider the $5,000 \times 5,000$ normalized matrix $\bA$ and let $m$ (the number of terms retained in the Taylor series approximation or the degree of the polynomial used in the Chebyshev polynomial approximation) range between five and 30 in increments of five. Let $s$, the number of random Gaussian vectors used for estimating the trace be set to 50, 100, 200, or 300. We used the formula
\begin{equation}\lambda_i = \frac{4}{2n}\sin^2\left(\frac{i\pi}{2n + 2}\right),\quad i=1,\dots,n\label{eq:trig_eigs}\end{equation}
to compute the eigenvalues of $\bA$ (after normalization) and we set $u$ to ${\lambda}_{\max}$ and $6{\lambda}_{\max}$.
Figures~\ref{fig:RE5K_banded_smax} and~\ref{fig:RE5K_banded_6smax} show the relative error (out of 100\%) for all combinations of $m$, $s$, and $u$ for the Taylor and Chebyshev approximation algorithms. We also report the error when no trace estimation (NTE) is used.

We observe that the relative error is higher than the one observed for the $5,000 \times 5,000$ random density matrix. We report wall-clock running times in Figure~\ref{fig:time5K_banded}. The Chebyshev-polynomial-based algorithm returns better approximations for all choices of the parameters and, in most cases, is faster than the Taylor-polynomial-based algorithm, e.g. for $m=5$, $s=50$ and $u={\lambda}_{\max}$, our estimate was computed in about two seconds and achieved less than $.5\%$ relative error.

\begin{figure}[!h]
	\centering
	\includegraphics[width=0.7\textwidth]{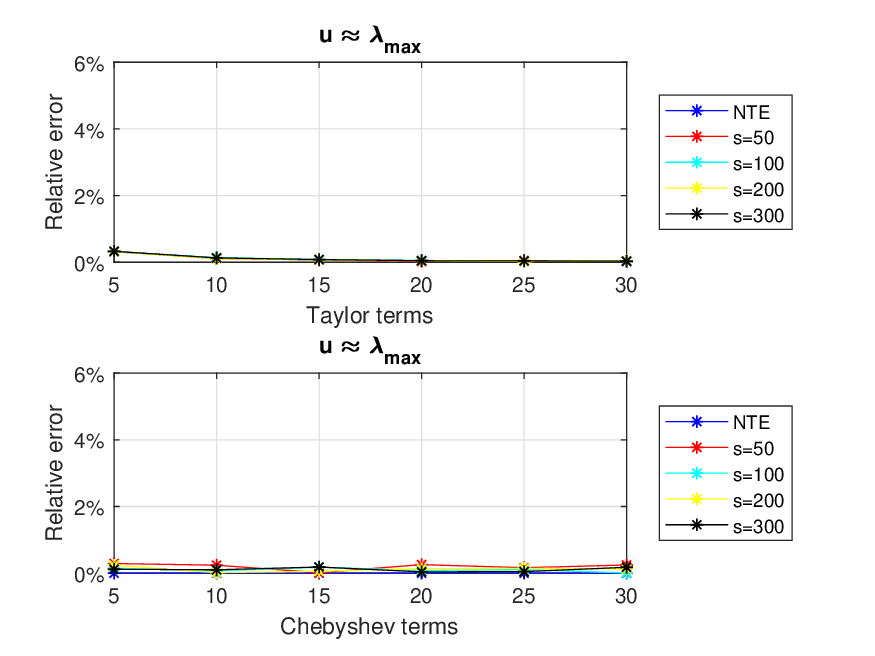}
	\caption{Relative error for $5,000 \times 5,000$ tridiagonal density matrix using the Taylor and the Chebyshev approximation algorithms with $u={\lambda}_{\max}$.}
	\label{fig:RE5K_banded_smax}	
\end{figure}

\begin{figure}[h!]
	\centering
	\includegraphics[width=0.7\textwidth]{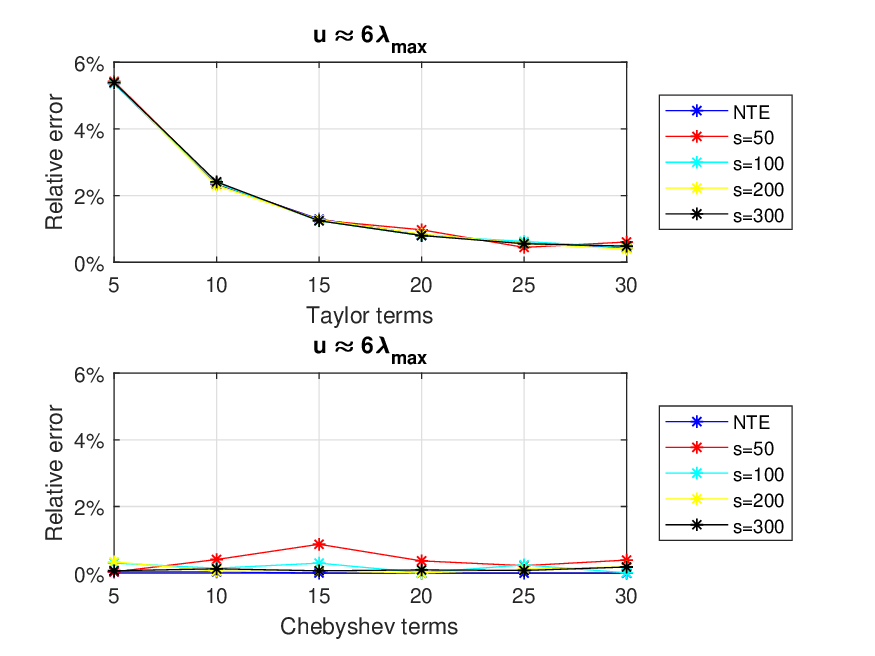}
	\caption{Relative error for $5,000 \times 5,000$ tridiagonal density matrix using the Taylor and the Chebyshev approximation algorithms with $u=6{\lambda}_{\max}$.}
	\label{fig:RE5K_banded_6smax}
\end{figure}

\begin{figure}[h!]
	\centering
	\includegraphics[width=0.7\textwidth]{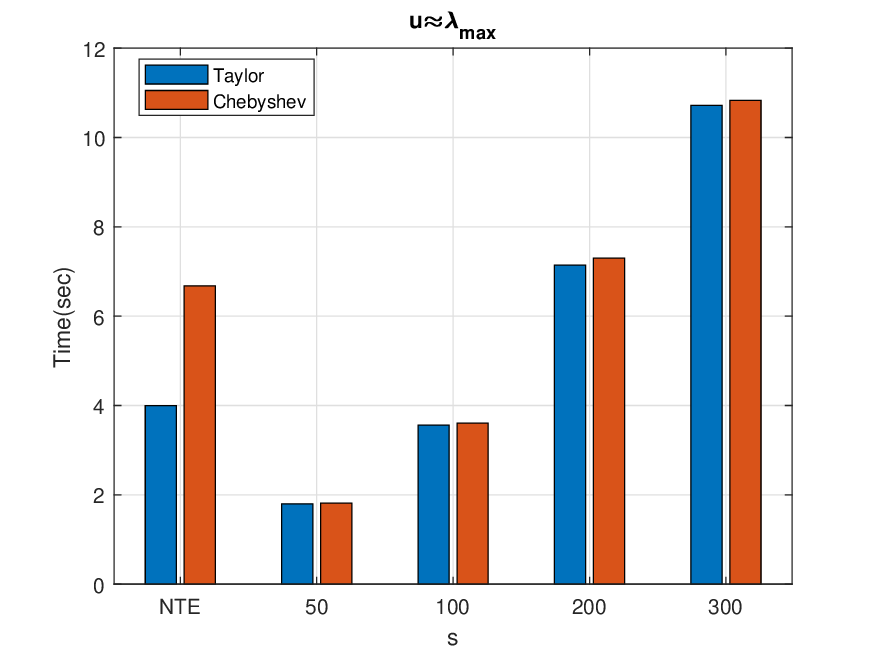}
	\caption{Wall-clock times: Taylor approximation (blue) and Chebyshev approximation (red) for $m=5$. Exact computation needed approximately $30$ seconds. }\label{fig:time5K_banded}
\end{figure}

We further considered a $10^{8}\times 10^8$ tridiagonal matrix of the form of eqn.~\eqref{eq:poisson_matrix}. Although an exact computation of the singular values of $\bA$ is not feasible (at least with our computational resources), such a computation is not necessary since eqn.~\eqref{eq:trig_eigs} provides a closed formula for its eigenvalues and, thus, its entropy. Let $m$ (the number of terms retained in the Taylor series approximation or the degree of the polynomial used in the Chebyshev polynomial approximation) be equal to five or ten and let $s$, the number of random Gaussian vectors used to estimate the trace be equal to 50 or 100. Figures~\ref{fig:RE10B_banded_smax} and~\ref{fig:Time10B_banded_smax} show the relative error (out of 100\%) and the runtime, respectively, for all combinations of $m$ and $s$ for both the Taylor and Chebyshev approximation algorithms. We observe that in both cases we estimated the entropy in less than ten minutes with a relative error below $0.15\%$.

\begin{figure}[h!]
	\centering
	\includegraphics[width=0.7\textwidth]{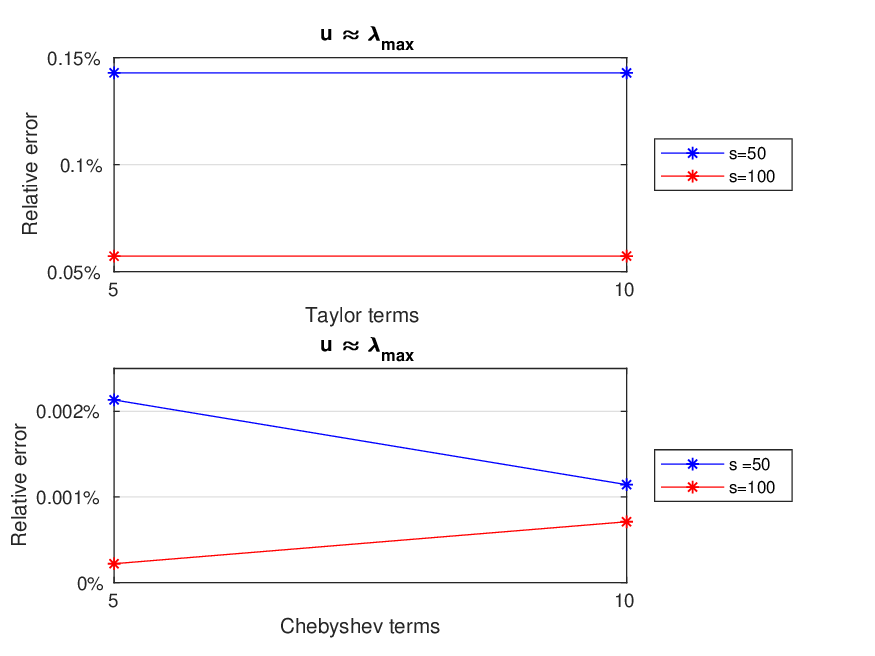}
	\caption{Relative error for the $10^8 \times 10^8$ tridiagonal density matrix using the Taylor and the Chebyshev approximation algorithms with $u={\lambda}_{\max}$.}
	\label{fig:RE10B_banded_smax}
\end{figure}

\begin{figure}[h!]
	\centering
	\includegraphics[width=0.7\textwidth]{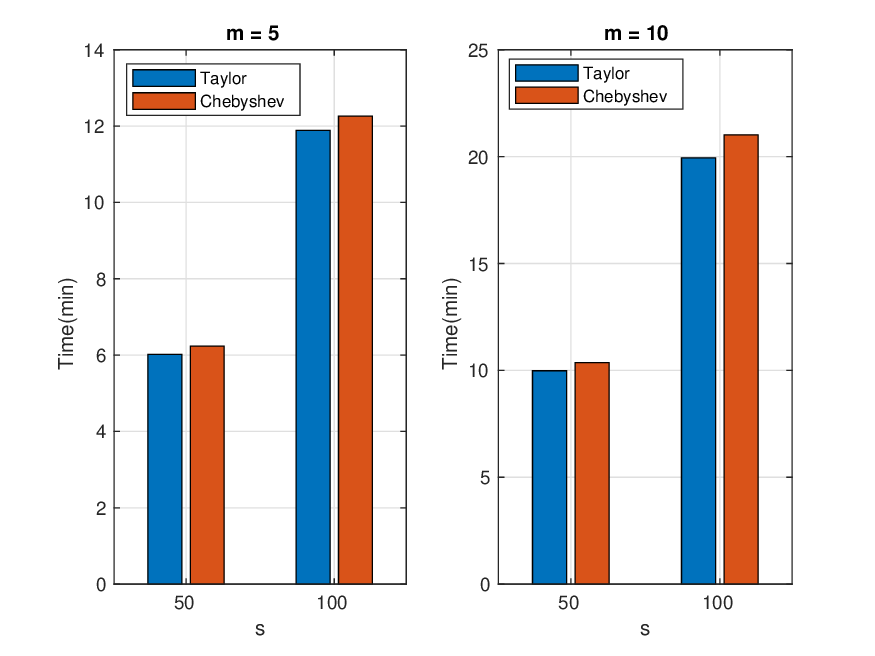}
	\caption{Wall-clock times: Taylor approximation (blue) and Chebyshev approximation (red) for the $10^8 \times 10^8$ triadiagonal density matrix. Exact computation using the Singular Value Decomposition was infeasible using our computational resources. }\label{fig:Time10B_banded_smax}
\end{figure}

The fourth dataset we experimented with includes $5,000 \times 5,000$ density matrices whose first top-$k$ eigenvalues follow a linear decay and the remaining $5,000-k$ a uniform distribution. Let $k$, the number of eigenvalues that follow the linear decay, take values in the set $\{50,\ 1000,\ 3500,\ 5000\}$. Let $m$, the number of terms retained in the Taylor series approximation or the degree of the polynomial used in the Chebyshev polynomial approximation, range between five and 30 in increments of five. Let $s$, the number of random Gaussian vectors used to estimate the trace, be set to $\{50, 100, 200, 300\}$. The estimate of the largest eigenvalue $u$ is set to $\tilde{\lambda}_{\max}$. Figures~\ref{fig:RE5K_nosmooth_d50} to \ref{fig:RE5K_nosmooth_d5000} show the relative error (out of 100\%) for all combinations of $k$, $m$, $s$, and $u$ for the Taylor and Chebyshev approximation algorithms.

We observe that the relative error is decreasing as $k$ increases. It is worth noting that when $k=3,500$ and $k=5,000$ the Taylor-polynomial-based algorithm returns better relative error approximation than the Chebyshev-polynomial-based algorithm. In the latter case we observe that the relative error of the Taylor-based algorithm is almost zero. This observation has a simple explanation. Figure~\ref{fig:eigenvaluedist} shows the distribution of the eigenvalues in the four cases we examine. We observe that for $k=50$ the eigenvalues are spread in the interval $(10^{-2},10^{-4})$; for $k=1,000$ the eigenvalues are spread in the interval $(10^{-3},10^{-4})$; while for $k=3,500$ or $k=5,000$ the eigenvalues are of order $10^{-4}$. It is well known that the Taylor polynomial returns highly accurate approximations when it is computed on values lying inside the open disc centered at a specific value $u$, which, in our case, is the approximation to the dominant eigenvalue. The radius of the disk is roughly $r = \lambda_{m+1}/\lambda_{m}$, where $m$ is the degree of the Taylor polynomial. If $r\leq 1$ then the Taylor polynomial converges; otherwise it diverges. Figure~\ref{fig:convrate} shows the convergence rate for various values of $k$. We observe that for $k=50$ the polynomial diverges, which leads to increased errors for the Taylor-based approximation algorithm (reported error close to $23\%$). In all other cases, the convergence rate is close to one, resulting in negligible impact to the overall error.

In all four cases, the Chebyshev-polynomial based algorithm behaves better or similar to the Taylor-polynomial based algorithm. It is worth noting that when the majority of the eigenvalues are clustered around the smallest eigenvalue, then to achieve relative error similar to the one observed for the QETLAB random density matrices, more than $30$ polynomial terms need to be retained, which increases the computational time of our algorithms. The increase of the computational time as well as the increased relative error can be justified by the large condition number that these matrices have (remember that for both approximation algorithms the running time depends on the approximate condition number $u/l$). As an example, for $k=50$, the condition number is in the order of hundreds which is significant larger than the roughly constant condition number when $k=5,000$.

\begin{figure}[!h]
	\centering
	\includegraphics[width=0.7\textwidth]{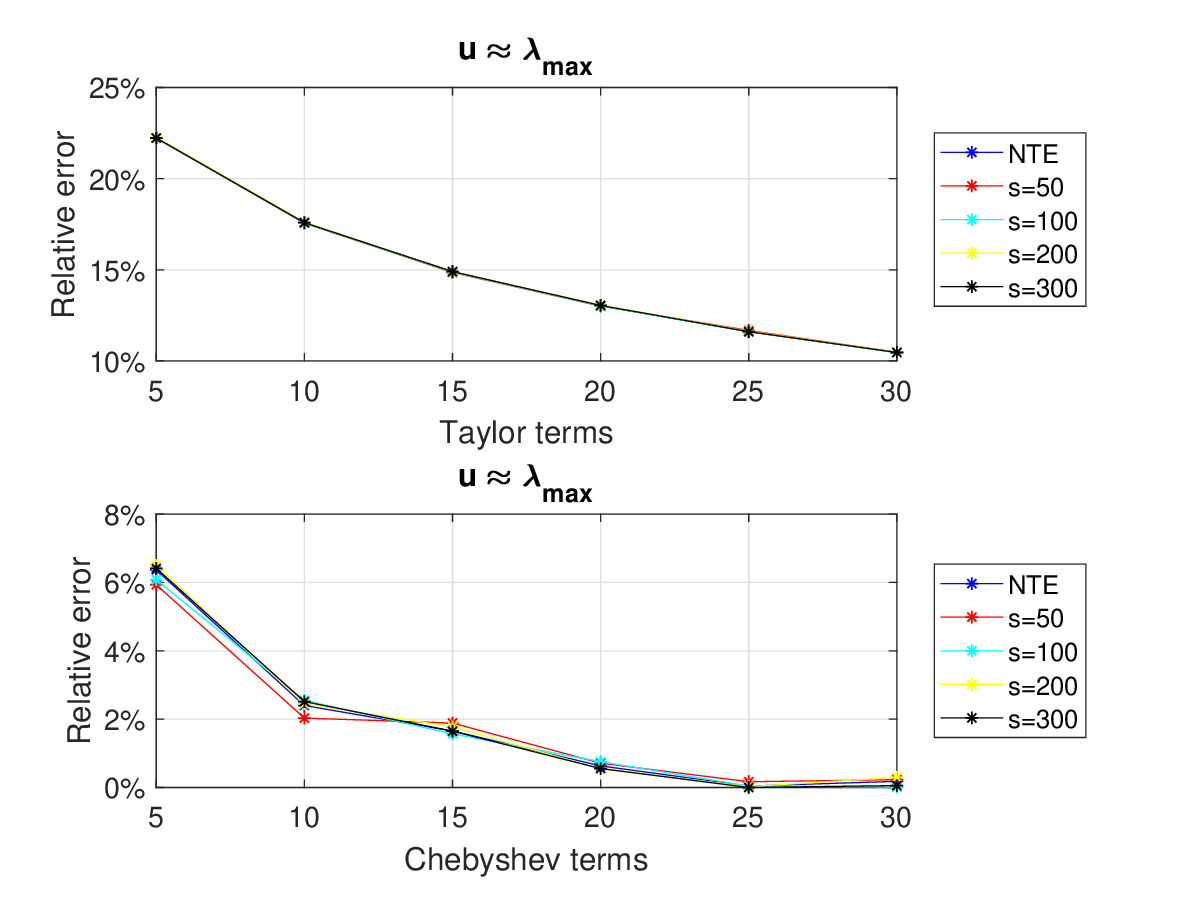}
	\caption{Relative error for $5,000 \times 5,000$ density matrix with the top-$50$ eigenvalues decaying linearly using the Taylor and the Chebyshev approximation algorithms with $u={\lambda}_{\max}$.}
	\label{fig:RE5K_nosmooth_d50}	
\end{figure}

\begin{figure}[!h]
	\centering
	\includegraphics[width=0.7\textwidth]{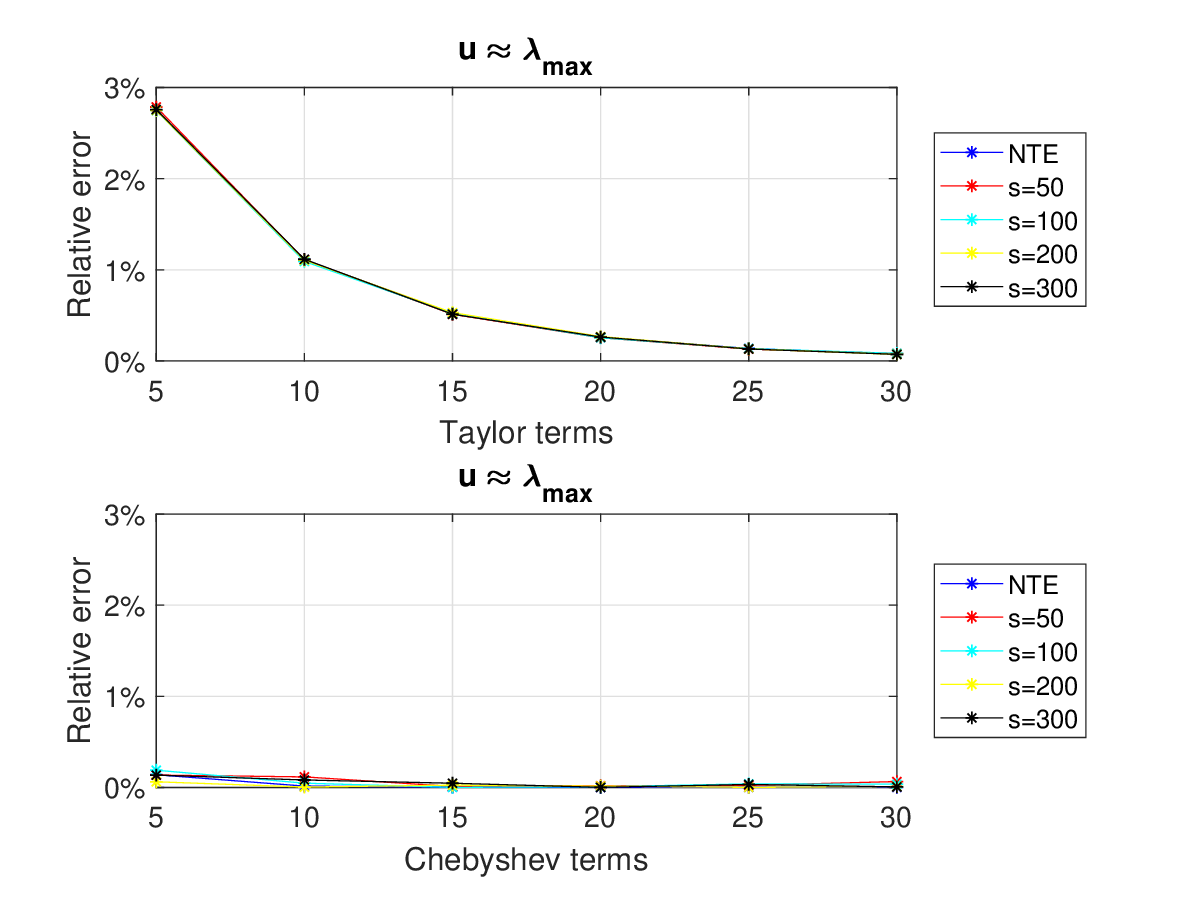}
	\caption{Relative error for $5,000 \times 5,000$ density matrix with the top-$1000$ eigenvalues decaying linearly using the Taylor and the Chebyshev approximation algorithms with $u={\lambda}_{\max}$.}
	\label{fig:RE5K_nosmooth_d1000}
\end{figure}

\begin{figure}[!h]
	\centering
	\includegraphics[width=0.7\textwidth]{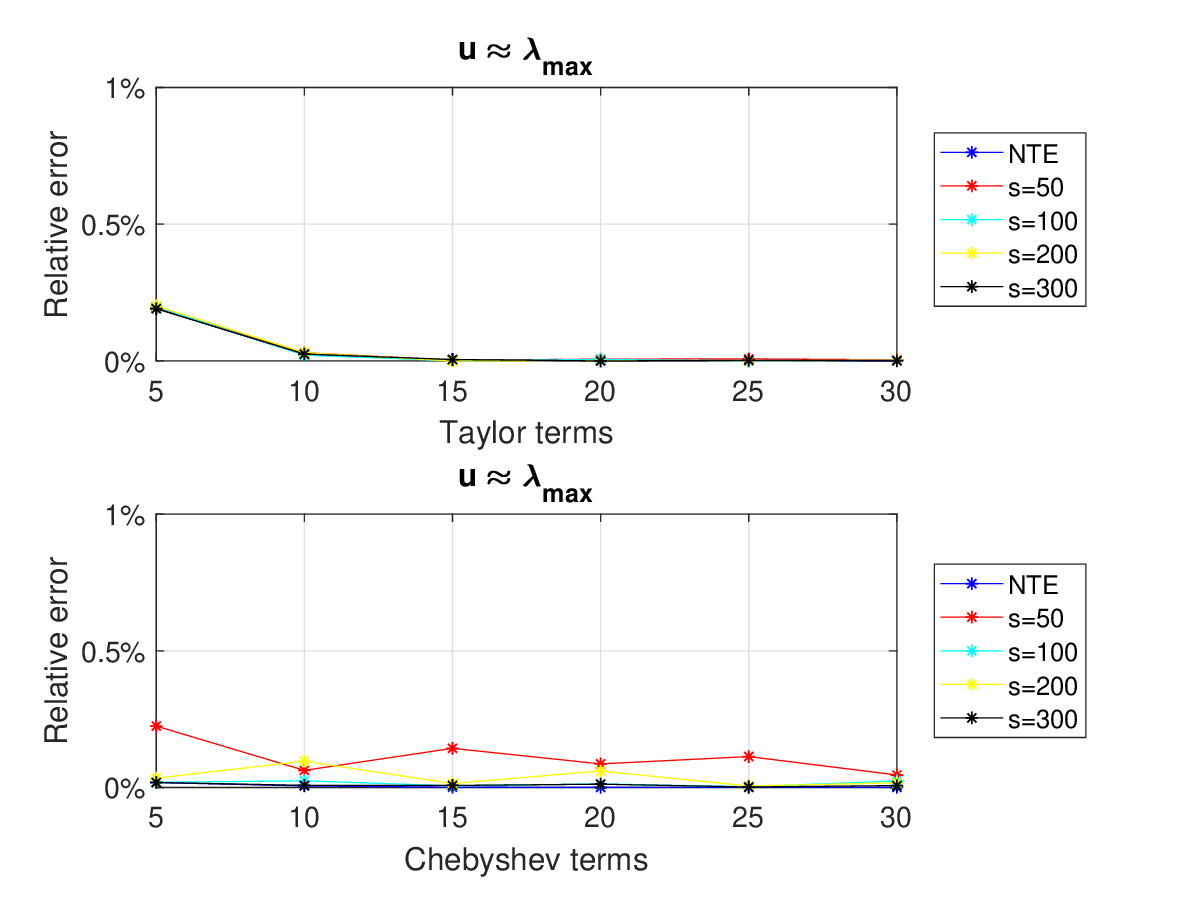}
	\caption{Relative error for $5,000 \times 5,000$ density matrix with the top-$3500$ eigenvalues decaying linearly using the Taylor and the Chebyshev approximation algorithms with $u={\lambda}_{\max}$.}\label{fig:RE5K_nosmooth_d3500}
\end{figure}

\begin{figure}[!h]
\centering
\includegraphics[width=0.7\textwidth]{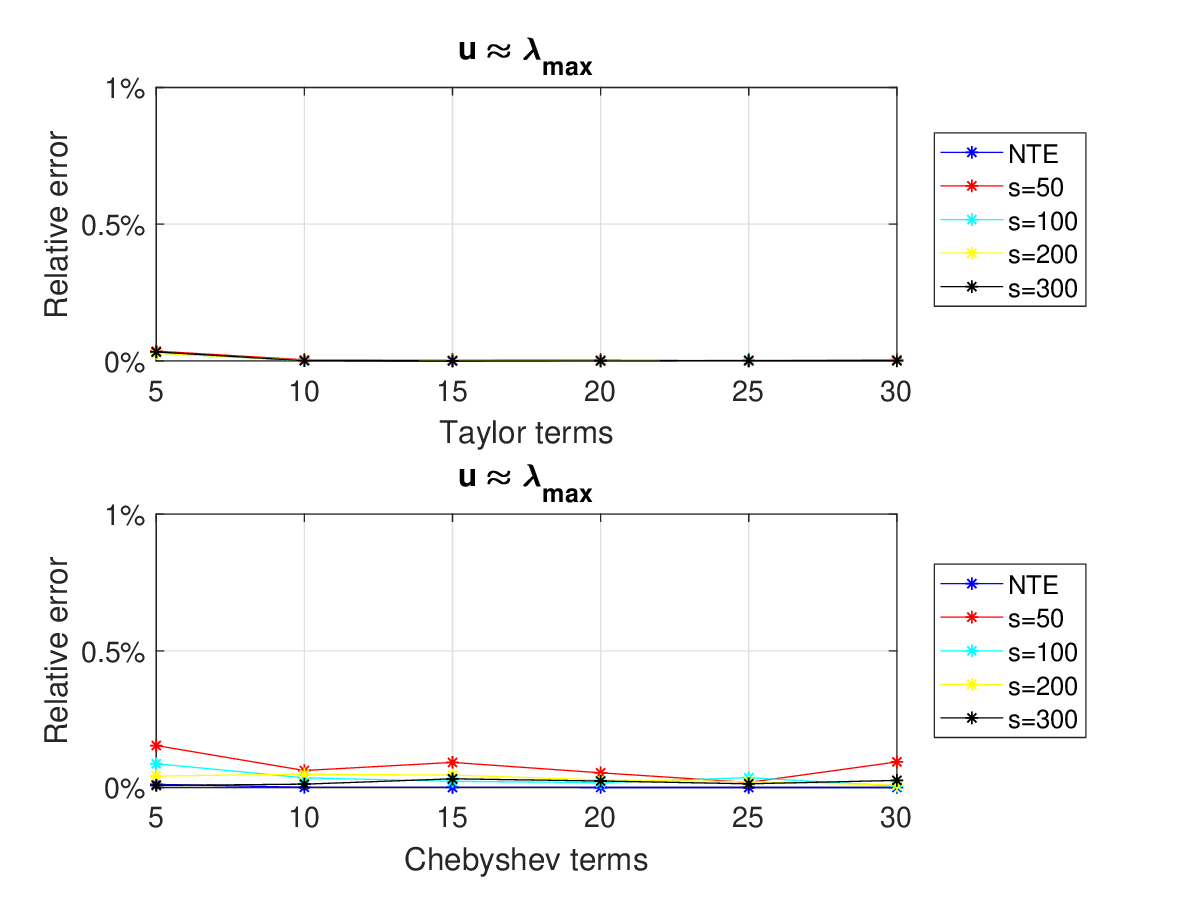}
\caption{Relative error for $5,000 \times 5,000$ density matrix with the top-$5000$ eigenvalues decaying linearly using the Taylor and the Chebyshev approximation algorithms with $u={\lambda}_{\max}$. }\label{fig:RE5K_nosmooth_d5000}
\end{figure}

\begin{figure}[!h]
\centering
\includegraphics[width=0.7\textwidth]{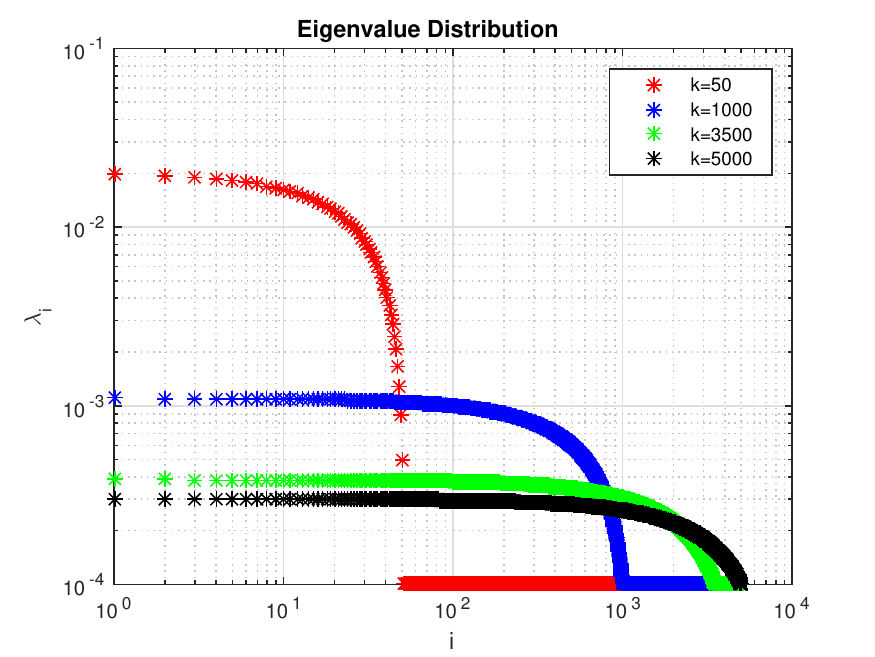}
\caption{Eigenvalue distribution of $5,000 \times 5,000$ density matrices with the top-$k=\{50,\ 1000,\ 3500,\ 5000\}$ eigenvalues decaying linearly and the remaining ones ($5,000-k$) following a uniform distribution.}\label{fig:eigenvaluedist}
\end{figure}

\begin{figure}[!h]
	\centering
	\includegraphics[width=0.7\textwidth]{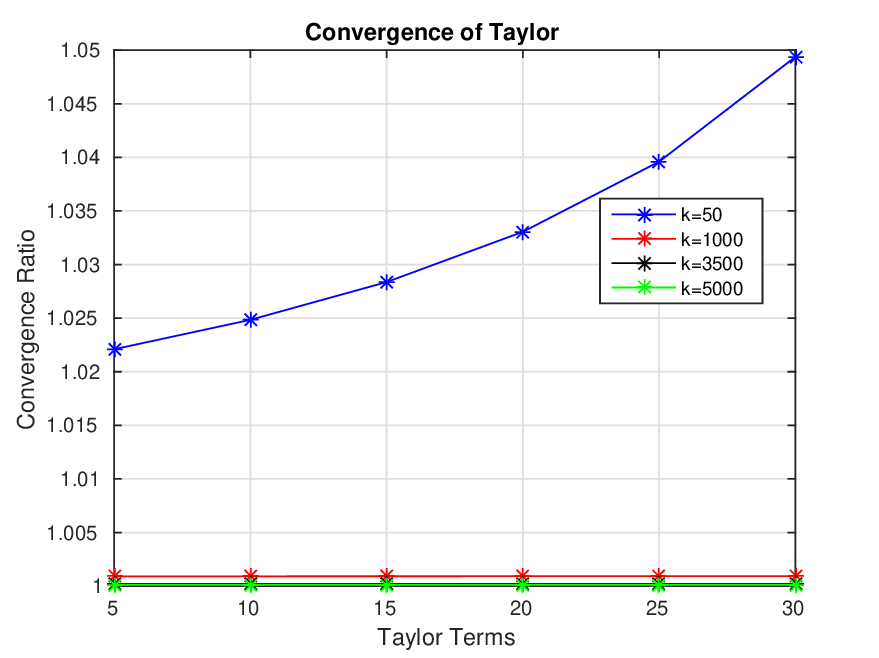}
	\caption{Convergence radius of the Taylor polynomial for the $5,000 \times 5,000$ density matrices with the top-$k=\{50,\ 1000,\ 3500,\ 5000\} $ eigenvalues decaying linearly and the remaining ones ($5,000-k$) following a uniform distribution.}\label{fig:convrate}
\end{figure}

\subsection{Empirical Results for the Hermitian Case}
Our last dataset is a random $5,000 \times 5,000$ complex density matrix generated using the QETLAB Matlab toolbox. We used the function~\texttt{RandomDensityMatrix} of QETLAB and the Haar measure. Let $m$ (the number of terms retained in the Taylor series approximation or the degree of the polynomial used in the Chebyshev polynomial approximation) range between five and 30 in increments of five. Let $s$, the number of random Gaussian vectors used to estimate the trace, be set to $\{50, 100, 200, 300\}$. Figures~\ref{fig:RE5KTaylor_smax_complex} and~\ref{fig:RE5KCheb_smax_complex} show the relative error (out of 100\%) for all combinations of $m$, $s$, and $u$ for the Taylor-based and Chebyshev-based approximation algorithms respectively.

We observe that the relative error is always small, typically below $1\%$, for any choice of the parameters $s$ and $m$. The NTE line (no trace estimation) in the plots serves as a lower bound for the relative error.
We note that computing the exact Von-Neumann entropy took approximately $52$ seconds for matrices of this size. Finally, our algorithm seems to outperform exact computation of the von-Neumann entropy by approximating it in about ten seconds (for the Taylor-based approach) with a relative error of $0.5\%$ using $100$ random Gaussian vectors and retaining ten Taylor terms (see Fig.~\ref{fig:time5KTaylor_complex}) or in about $18$ seconds (for the Chebyshev-based approach) with a relative error of $0.2\%$ using $50$ random Gaussian vectors and five Chebyshev polynomials (see Fig.\ref{fig:time5KCheb_complex}) .

\begin{figure}[h!]
	\centering
	\includegraphics[width=0.7\textwidth]{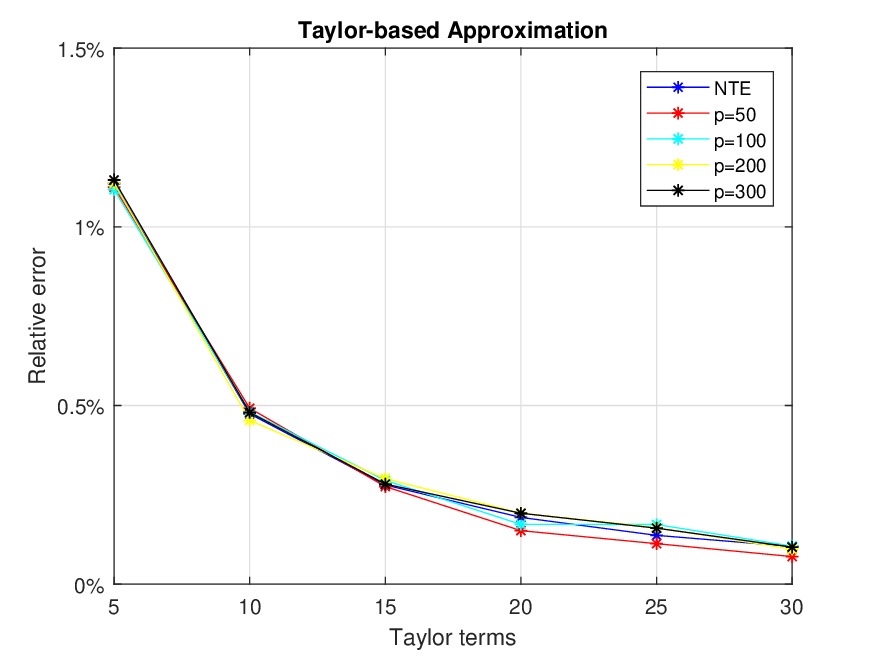}
	\caption{Relative error for $5,000 \times 5,000$ density matrix using the Taylor approximation algorithm.}
	\label{fig:RE5KTaylor_smax_complex}	
\end{figure}

\begin{figure}[h!]
	\centering
	\includegraphics[width=0.7\textwidth]{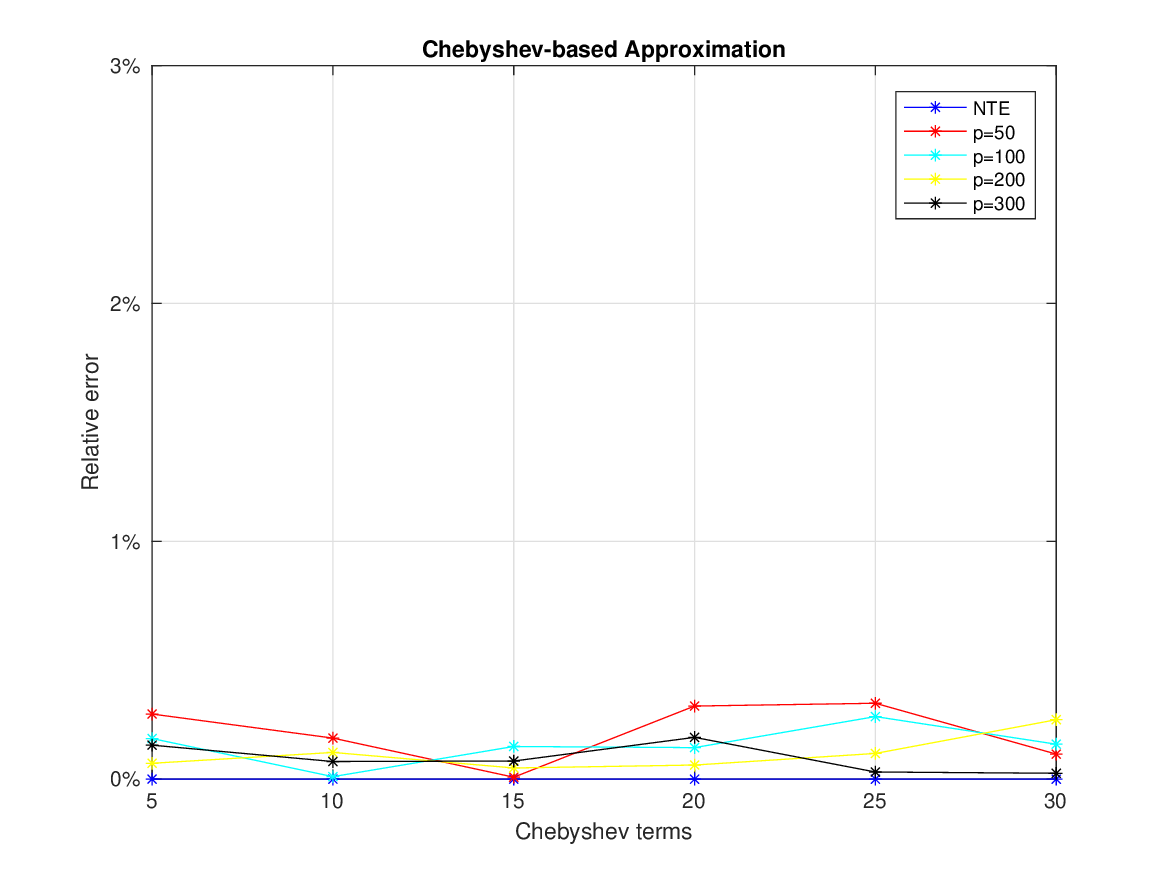}
	\caption{Relative error for $5,000 \times 5,000$ density matrix using the Chebyshev approximation algorithm.}
	\label{fig:RE5KCheb_smax_complex}	
\end{figure}

\begin{figure}[h!]
	\centering
	\includegraphics[width=0.7\textwidth]{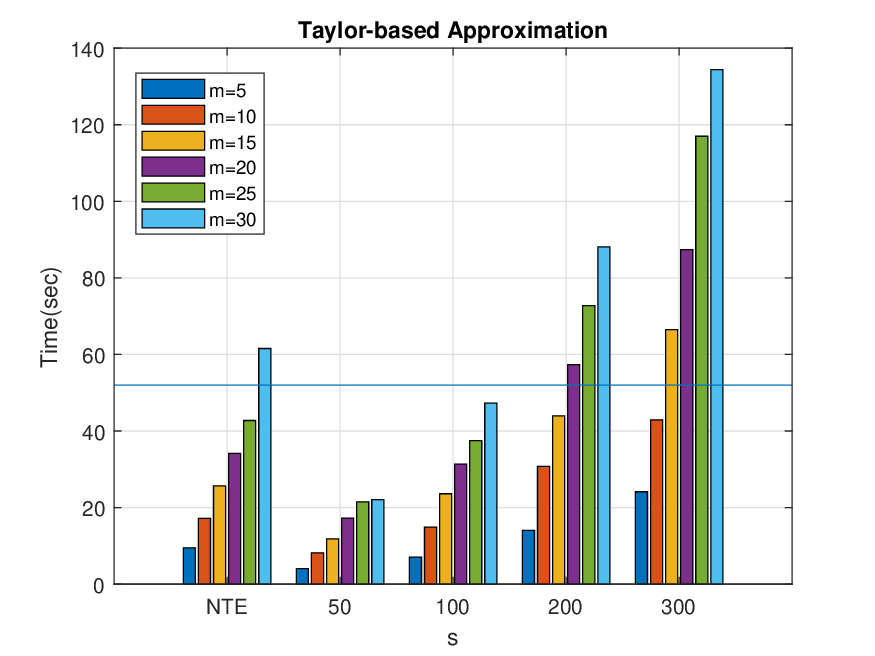}
	\caption{Time (in seconds) to run the Taylor-based algorithm for the $5,000 \times 5,000$ density matrix for all combinations of $m$ and $s$. \textit{Exactly} computing the Von-Neumann entropy took approximately 52 seconds, designated by the straight horizontal line in the figure.}\label{fig:time5KTaylor_complex}
\end{figure}

\begin{figure}[h!]
	\centering
	\includegraphics[width=0.7\textwidth]{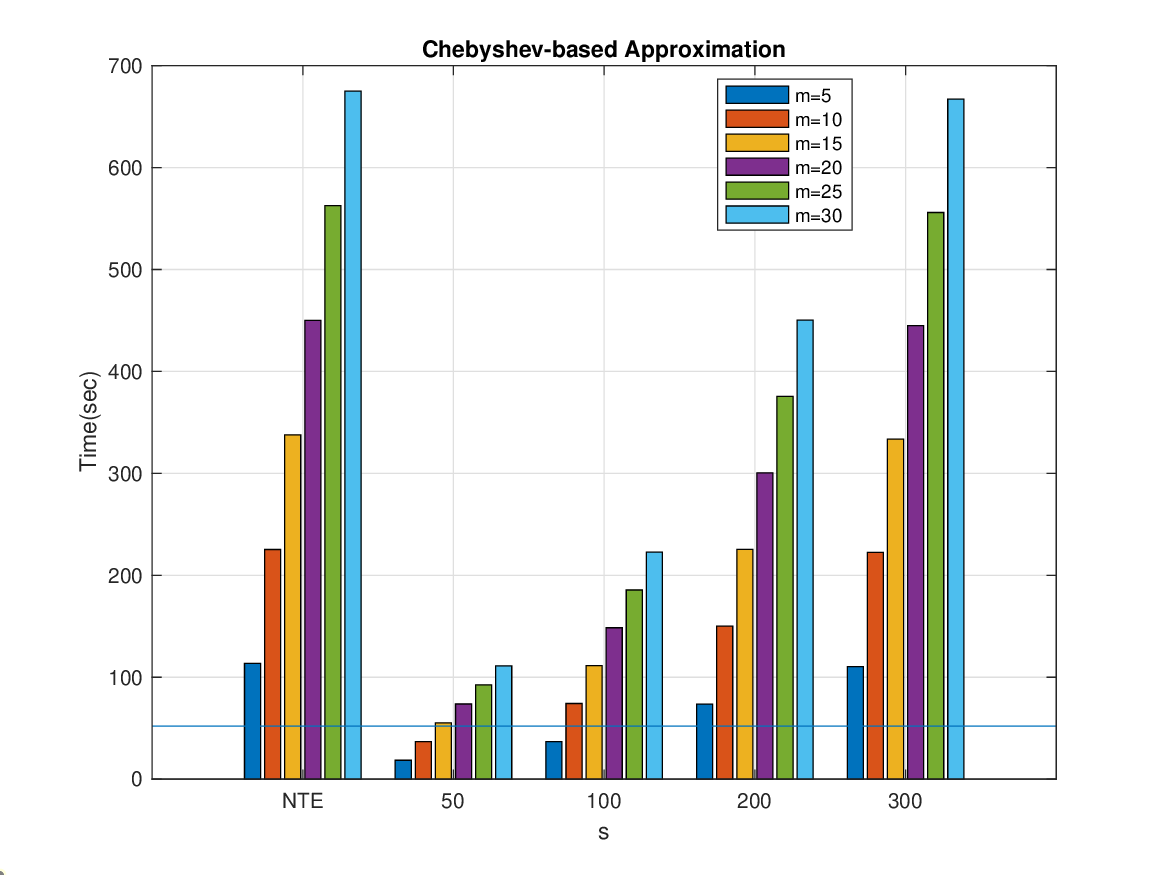}
	\caption{Time (in seconds) to run the Chebyshev-based algorithm for the $5,000 \times 5,000$ density matrix for all combinations of $m$ and $s$. \textit{Exactly} computing the Von-Neumann entropy took approximately 52 seconds, designated by the straight horizontal line in the figure.}\label{fig:time5KCheb_complex}
\end{figure}

\subsection{Empirical results for the random projection approximation algorithms}
	
In order to evaluate our third algorithm, we generated low-rank random density matrices (recall that the algorithm of Section~\ref{sxn:rp} works only for random density matrices of rank $k$ with $k \ll n$). Additionally, in order to evaluate the subsampled randomized Hadamard transform and avoid padding with all-zero rows, we focused on values of $n$ (the number of rows and columns of the density matrix) that are powers of two. Finally, we also evaluated a simpler random projection matrix, namely the Gaussian random matrix, whose entries are all Gaussian random variables with zero mean and unit variance.

We generated \textit{low rank} random density matrices with exponentially (using the QETLAB Matlab toolbox) and linearly decaying eigenvalues. The sizes of the density matrices we tested were $4,096 \times 4,096$ and $16,384 \times 16,384$. We also generated much larger $30,000 \times 30,000$ random matrices on which we only experimented with the Gaussian random projection matrix.

We computed all the non-zero singular values of a matrix using the~\texttt{svds} function of Matlab in order to take advantage of the fact that the target density matrix has low rank. The accuracy of our proposed approximation algorithms was evaluated by measuring the relative error; wall-clock times were reported in order to quantify the speedup that our approximation algorithms were able to achieve.

We start by reporting results for Algorithm~\ref{alg:EN} using the Gaussian, the subsampled randomized Hadamard transform (Algorithm~\ref{alg:SHD}), and the input-sparsity transform (Algorithm~\ref{alg:IS}) random projection matrices. Consider the $4,096 \times 4,096$ low rank density matrices and let $k$, the rank of the matrix, be $10$, $50$, $100$, and $300$. Let $s$, the number of columns of the random projection matrix, range from $50$ to $1,000$ in increments of $50$. Figures~\ref{fig:RP_RE_4K} and~\ref{fig:RP_RE_ns_4K} depict the relative error (out of $100\%$) for all combinations of $k$ and $s$. We also report the wall-clock running times for values of $s$ between $300$ and $450$ at Figure~\ref{fig:RP_TM_4K}.

We observe that in the case of the random matrix with exponentially decaying eigenvalues and for all algorithms the relative error is under $0.3\%$ for any choice of the parameters $k$ and $s$ and, as expected, decreases as the dimension of the projection space $s$ grows larger. Interestingly, all three random projection matrices returned essentially identical accuracies and very comparable wall-clock running time results. This observation is due to the fact that for all choices of $k$, after scaling the matrix to unit trace, the only eigenvalues that were numerically non-zero were the $10$ dominant ones.
	
In the case of the random matrix with linearly decaying eigenvalues (and for all algorithms) the relative error increases as the rank of the matrix increases and decreases as the size of the random projection matrix increases. This is expected: as the rank of the matrix increases, a larger random projection space is needed to capture the ``energy'' of the matrix. Indeed, we observe that for all values of $k$, setting $s=1,000$ guarantees a relative error under $1\%$. Similarly, for $k=10$, the relative error is under $0.3\%$ for any choice of $s$.

The running time depends not only on the size of the matrix, but also on its rank, e.g. for $k=100$ and $s=450$, our approximation was computed in about $2.5$ seconds, whereas for $k=300$ and $s=450$, it was computed in less than one second. Considering, for example, the case of $k=300$ exponentially decaying eigenvalues, we observe that for $s=400$ we achieve relative error below $0.15$\% and a speedup of over $60$ times compared to the exact computation. Finally, it is observed that all three algorithms returned very comparable wall-clock running time results. This observation could be due to the fact that matrix multiplication is heavily optimized in Matlab and therefore the theoretical advantages of the Hadamard transform did not manifest themselves in practice.

\begin{figure}[!h]
	\centering
	\includegraphics[width=0.7\textwidth]{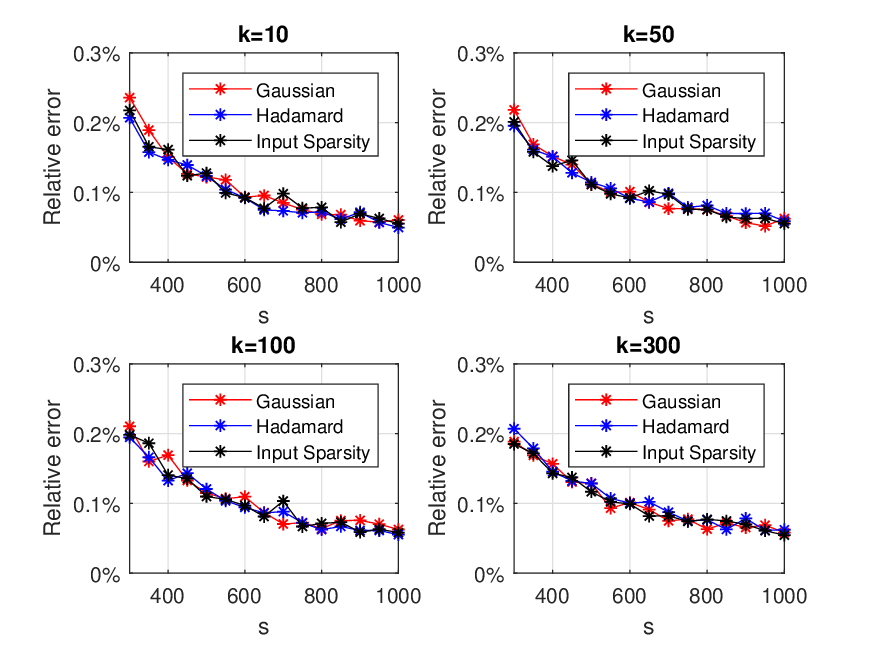}
	\caption{Relative error for the $4,096 \times 4,096$ rank-$k$ density matrix with exponentially decaying eigenvalues using Algorithm~\ref{alg:EN} with the Gaussian (red), the subsampled randomized Hadamard transform (blue), and the input sparsity transform (black) random projection matrices.}
	\label{fig:RP_RE_4K}	
\end{figure}

\begin{figure}[!h]
	\centering
	\includegraphics[width=0.7\textwidth]{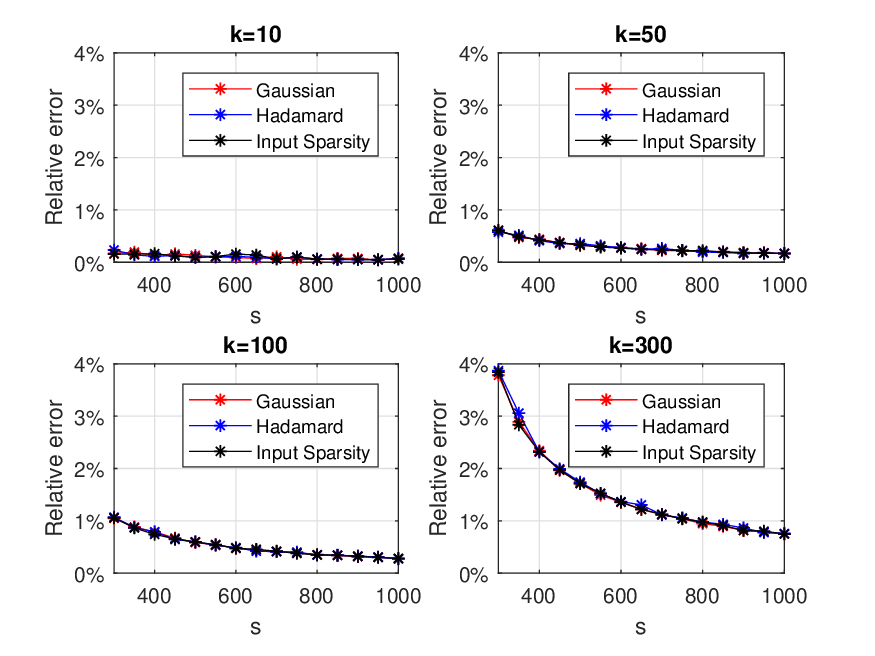}
	\caption{Relative error for the $4,096 \times 4,096$ rank-$k$ density matrix with linearly decaying eigenvalues using Algorithm~\ref{alg:EN} with the Gaussian (red), the subsampled randomized Hadamard transform (blue), and the input sparsity transform (black) random projection matrices.}
	\label{fig:RP_RE_ns_4K}	
\end{figure}

\begin{figure}[!h]
	\centering
	\includegraphics[width=0.7\textwidth]{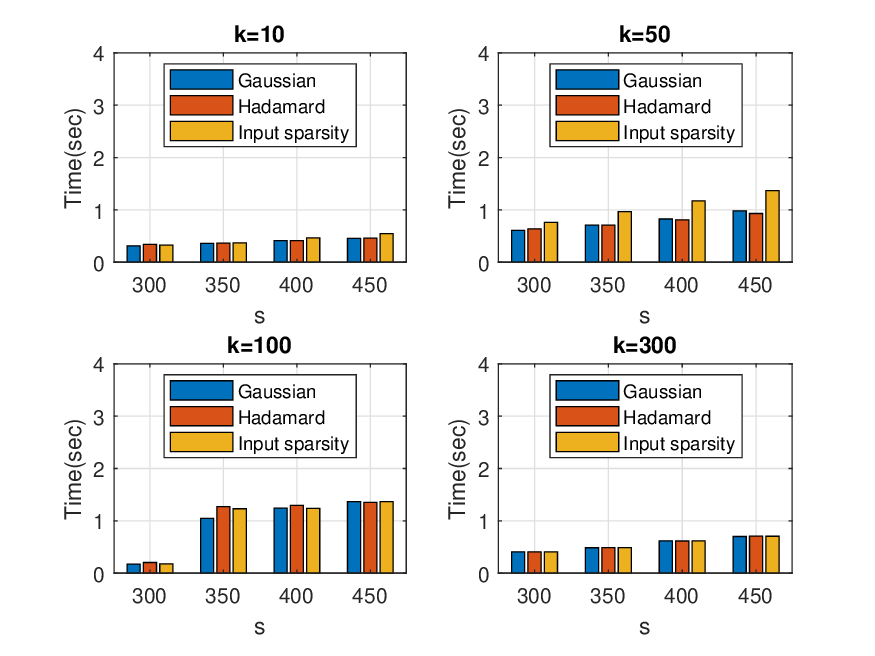}
	\caption{Wall-clock times: Algorithm~\ref{alg:EN} on $4,096 \times 4,096$ random matrices, with the Gaussian (blue), the subsampled randomized Hadamard transform (red) and the input sparsity transform (orange) projection matrices. The exact entropy was computed in $1.5$ seconds for the rank-10 approximation, in eight seconds for the rank-50 approximation, in $15$ seconds for the rank-100 approximation, and in one minute for the rank-300 approximation.}
	\label{fig:RP_TM_4K}
\end{figure}

The second dataset we experimented with was a $16,384 \times 16,384$ low rank density matrix. We set  $k=50$ and $k=500$ and we let $s$  take values in the set $\{500,\ 1000,\ 1500,\ldots,3000,\ 3500\}$. We report the relative error (out of $100\%$) for all combinations of $k$ and $s$ in Figure~\ref{fig:RP_RE_16K} for the matrix with exponentially decaying eigenvalues and in Figure~\ref{fig:RP_RE_16K_ns} for the matrix with linearly decaying eigenvalues. We also report the wall-clock running times for $s$ between $500$ and $2,000$ in Figure~\ref{fig:RP_TM_16K}.
We observe that the relative error is typically around $1\%$ for both types of matrices, with running times ranging between ten seconds and four minutes, significantly outperforming the exact entropy computation which took approximately $1.6$ minutes for the rank 50 approximation and 20 minutes for the rank 500 approximation.

\begin{figure}[!h]
	\centering
	\includegraphics[width=0.7\textwidth]{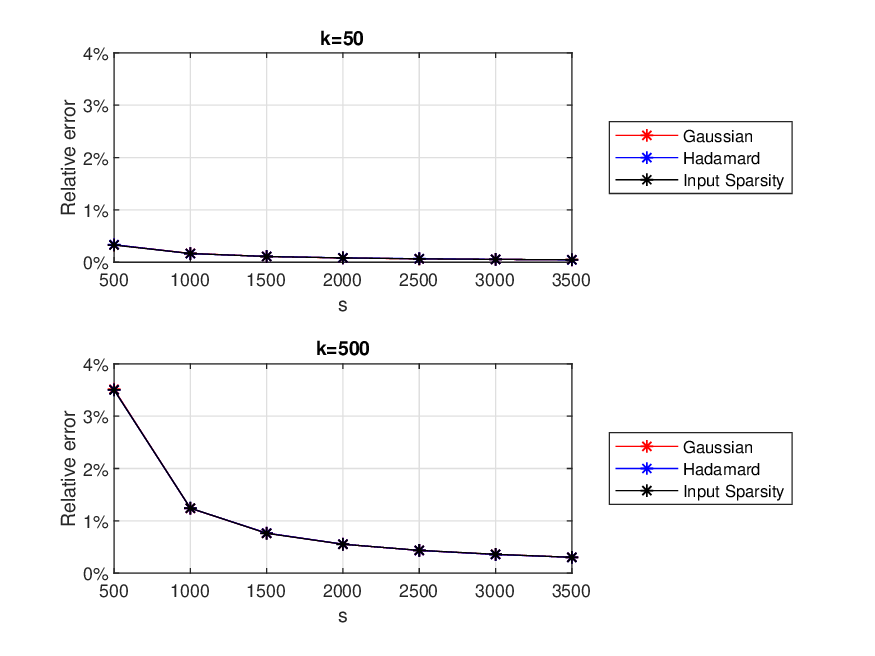}
	\caption{Relative error for the $16,384 \times 16,384$ rank-$k$ density matrix with exponentially decaying eigenvalues using Algorithm~\ref{alg:EN} with the Gaussian (red), the subsampled randomized Hadamard transform (blue), and the input sparsity transform (black) random projection matrices.}
	\label{fig:RP_RE_16K}	
\end{figure}

\begin{figure}[h!]
	\centering
	\includegraphics[width=0.7\textwidth]{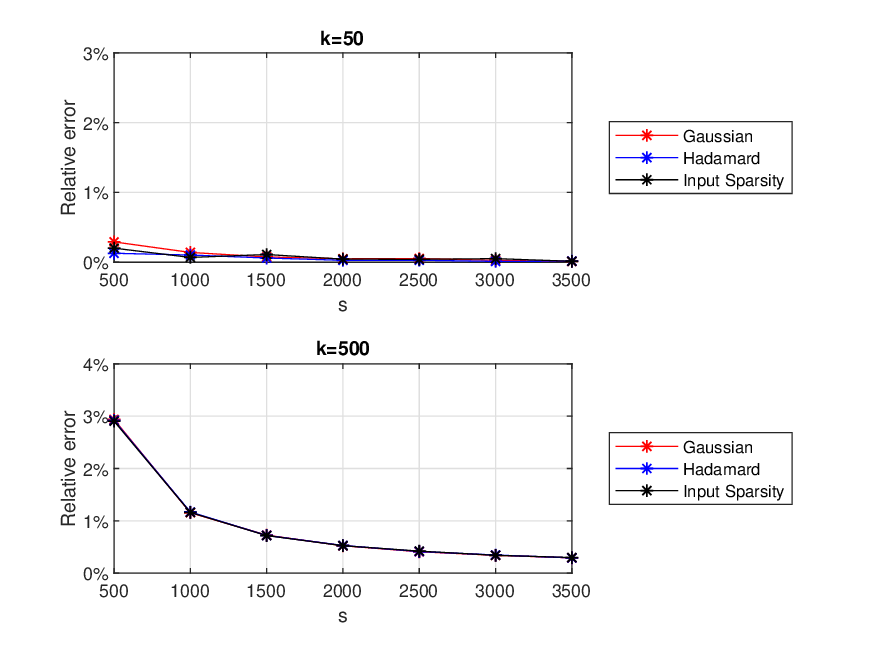}
	\caption{Relative error for the $16,384 \times 16,384$ rank-$k$ density matrix with linearly decaying eigenvalues using Algorithm~\ref{alg:EN} with the Gaussian (red), the subsampled randomized Hadamard transform (blue), and the input sparsity transform (black) random projection matrices.}
	\label{fig:RP_RE_16K_ns}	
\end{figure}

\begin{figure}[h!]
	\centering
	\includegraphics[width=0.7\textwidth]{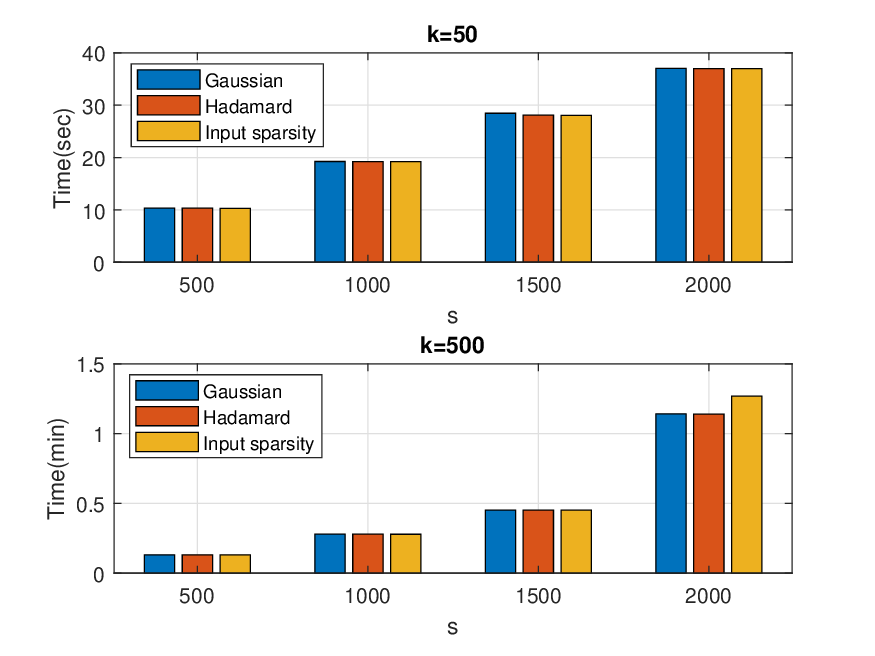}
	\caption{Wall-clock times: Algorithm~\ref{alg:EN} with the Gaussian (blue), the subsampled randomized Hadamard transform (red) and the input sparsity transform (orange) projection matrices. The exact entropy was computed in $1.6$ minutes for the rank 50 approximation and in $20$ minutes for the rank 500 approximation.}
	\label{fig:RP_TM_16K}
\end{figure}

The last dataset we experimented with was a $30,000 \times 30,000$ low rank density matrix on which we ran Algorithm~\ref{alg:EN} using a Gaussian random projection matrix. We set $k=50$ and $k=500$ and we let $s$ take values in the set $\{500,\ 1000,\ 1500,\ldots,3000,\ 3500\}$. We report the relative error (out of $100\%$) for all combinations of $k$ and $s$ in Figure~\ref{fig:RP_RE_30K} for the matrix with exponentially decaying eigenvalues and in Figure~\ref{fig:RP_RE_30K_ns} for the matrix with the linearly decaying eigenvalues. We also report the wall-clock running times for $s$ ranging between $500$ and $2,000$ in Figure~\ref{fig:RP_TM_30K}. We observe that the relative error is typically around $1\%$ for both types of matrices, with the running times ranging between $30$ seconds and two minutes, outperforming the exact entropy which was computed in six minutes for the rank 50 approximation and in one hour for the rank 500 approximation.

\begin{figure}[h!]
	\centering
	\includegraphics[width=0.7\textwidth]{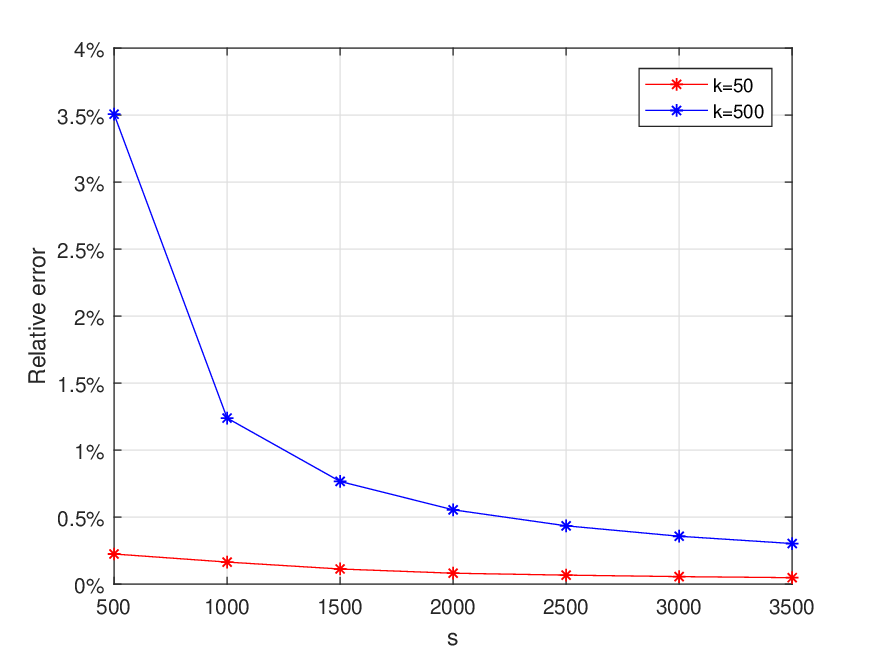}
	\caption{Relative error for the $30,000 \times 30,000$ rank-$k$ density matrix with exponentially decaying eigenvalues using Algorithm~\ref{alg:EN} with the Gaussian random projection matrix for $k=50$ (red) and for $k=500$ (blue).}
	\label{fig:RP_RE_30K}	
\end{figure}

\begin{figure}[h!]
	\centering
	\includegraphics[width=0.7\textwidth]{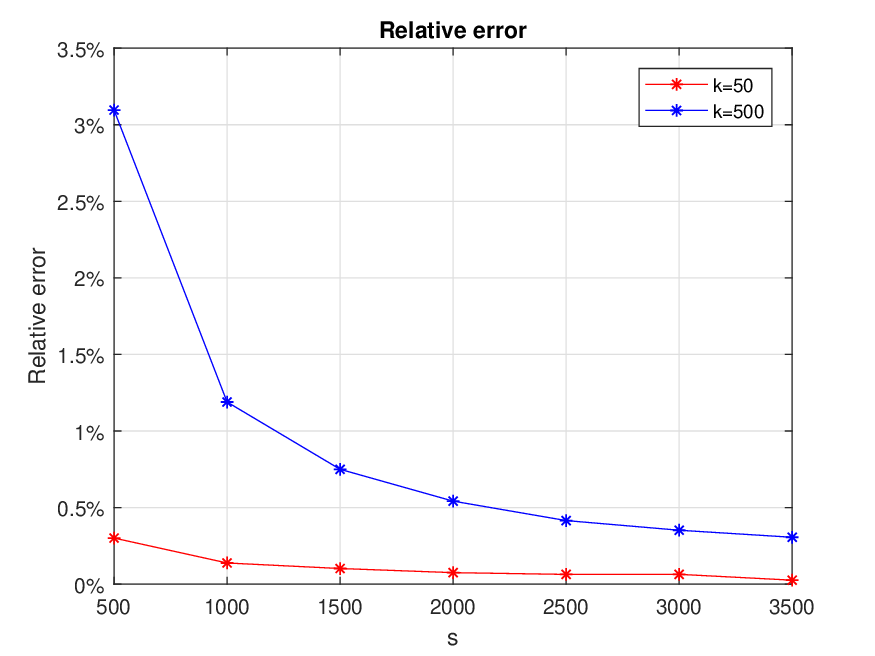}
	\caption{Relative error for the $30,000 \times 30,000$ rank-$k$ density matrix with linearly decaying eigenvalues using Algorithm~\ref{alg:EN} with the Gaussian random projection matrix for $k=50$ (red) and for $k=500$ (blue).}
	\label{fig:RP_RE_30K_ns}	
\end{figure}

\begin{figure}[h!]
	\centering
	\includegraphics[width=0.7\textwidth]{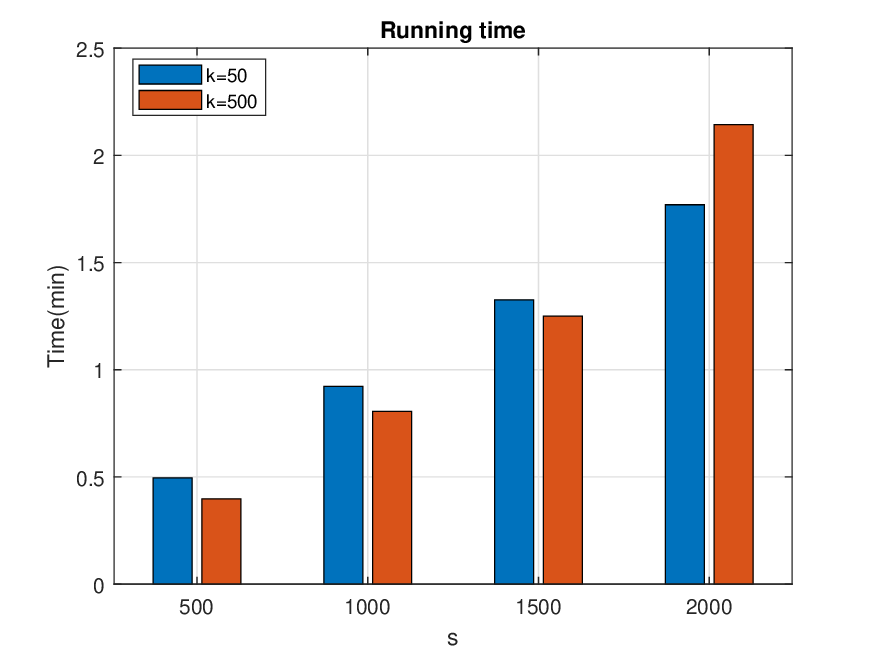}
	\caption{Wall-clock times: rank-50 approximation (blue) and rank-500 approximation (red). Exact computation needed about six minutes and one hour respectively.}
	\label{fig:RP_TM_30K}
\end{figure} 

\section{Conclusions and open problems}

We presented and analyzed three randomized algorithms to approximate the von Neumann entropy of density matrices. Our algorithms leverage recent developments in the RandNLA literature: randomized trace estimators, provable bounds for the power method, the use of random projections to approximate the singular values of a matrix, etc. All three algorithms come with provable accuracy guarantees under assumptions on the spectrum of the density matrix. Empirical evaluations on $30,000 \times 30,000$ synthetic density matrices support our theoretical findings and demonstrate that we can efficiently approximate the von Neumann entropy in a few minutes with minimal loss in accuracy, whereas an the exact computation takes over 5.5 hours.

An interesting open problem would be to consider the estimation of the cross entropy. The cross entropy is a measure between two probability distributions and is particularly important in information theory.  Algebraically, it can be defined as  $\entropy{\bS,\bR} = -\trace{\bS\log{\bR}}$, where $\bS\in \comp^{n\times n}$ and $\bR\in\comp^{n\times n}$ are density matrices with a full set of pure states.  One can further extend our polynomial-based approaches using the Taylor expansion or the Chebyshev polynomials to approximate the matrix $\Gamma = \bS\log\bR$. The case where both or one of the density matrices have an incomplete set of pure states is an open problem: if $\bR$ is low-rank, then our first two approaches would not work for the reasons discussed in Section~\ref{sxn:rp}. However, if the only low rank matrix is $\bS$, then our first two approaches would still work: $\bS$ is only appearing in the trace estimation part, and having eigenvalues equal to zero does not affect the positive semi-definiteness of $\Gamma$. When $\bR$ is of low rank then one might be able to use our random projection approaches to reduce its dimensionality and/or the dimensionality of $\bS$.

The most important open problem is to relax (or eliminate) the assumptions associated with our three key technical results without sacrificing our running time guarantees. It would be critical to understand whether our assumptions are, for example, necessary to achieve relative error approximations and either provide algorithmic results that relax or eliminate our assumptions or provide matching lower bounds and counterexamples.

\bibliographystyle{plain}
\bibliography{entropy}

\begin{thebibliography}{10}

\bibitem{Ailon2009}
Nir Ailon and Bernard Chazelle.
\newblock {The Fast Johnson--Lindenstrauss Transform and Approximate Nearest
  Neighbors}.
\newblock {\em SIAM Journal on Computing}, 39(1):302--322, 2009.

\bibitem{AT11}
Haim Avron and Sivan Toledo.
\newblock {Randomized Algorithms for Estimating the Trace of an Implicit
  Symmetric Positive Semi-definite Matrix}.
\newblock {\em Journal of the ACM}, 58(2):8, 2011.

\bibitem{Boutsidis2016}
Christos Boutsidis, Petros Drineas, Prabhanjan Kambadur, Eugenia-Maria
  Kontopoulou, and Anastasios Zouzias.
\newblock {A Randomized Algorithm for Approximating the Log Determinant of a
  Symmetric Positive Definite Matrix}.
\newblock {\em Linear Algebra and its Applications}, 533:95--117, 2017.

\bibitem{Clarkson2013}
Kenneth~L. Clarkson and David~P. Woodruff.
\newblock {Low Rank Approximation and Regression in Input Sparsity Time}.
\newblock In {\em Proceedings of the 45th annual ACM Symposium on Theory of
  Computing}, pages 81--90. ACM Press, 2013.

\bibitem{Demmel1992}
James Demmel and Kresimir Veselic.
\newblock {Jacobi's Method is more Accurate than QR}.
\newblock {\em SIAM Journal on Matrix Analysis and Applications},
  13(4):1204--1245, 1992.

\bibitem{Drineas2016}
Petros Drineas and Michael~W. Mahoney.
\newblock {RandNLA: Randomized Numerical Linear Algebra}.
\newblock {\em Communications of the ACM}, 59(6):80--90, 2016.

\bibitem{Drineas2011}
Petros Drineas, Michael~W. Mahoney, S~Muthukrishnan, and Tam{\'{a}}s
  Sarl{\'{o}}s.
\newblock {Faster Least Squares Approximation}.
\newblock {\em Numerische Mathematik}, 117:219--249, 2011.

\bibitem{GVL96}
Gene~H. Golub and Charles~F. {Van Loan}.
\newblock {\em {Matrix Computations (3rd Ed.)}}.
\newblock Johns Hopkins University Press, Baltimore, MD, USA, 1996.

\bibitem{Han2015}
Insu Han, Dmitry Malioutov, and Jinwoo Shin.
\newblock {Large-scale Log-determinant Computation through Stochastic Chebyshev
  Expansions}.
\newblock {\em Proceedings of the 32nd International Conference on Machine
  Learning}, 37:908--917, 2015.

\bibitem{HNO08}
Nicholas~JA Harvey, Jelani Nelson, and Krzysztof Onak.
\newblock Sketching and streaming entropy via approximation theory.
\newblock In {\em IEEE Annual Symposium on Foundations of Computer Science},
  pages 489--498, 2008.

\bibitem{Higham:2008}
Nicholas~J. Higham.
\newblock {\em {Functions of Matrices: Theory and Computation}}.
\newblock Society for Industrial and Applied Mathematics, Philadelphia, PA,
  USA, 2008.

\bibitem{qetlab}
Nathaniel Johnston.
\newblock {QETLAB}: A {Matlab} toolbox for quantum entanglement, version 0.9.
\newblock \url{http://qetlab.com}, 2016.

\bibitem{Meng2013}
Xiangrui Meng and Michael~W. Mahoney.
\newblock {Low-distortion Subspace Embeddings in Input-sparsity Time and
  Applications to Robust Linear Regression}.
\newblock In {\em Proceedings of the 45th annual ACM Symposium on Theory of
  Computing}, pages 91--100, 2013.

\bibitem{MNSUW18}
Cameron Musco, Praneeth Netrapalli, Aaron Sidford, Shashanka Ubaru, and
  David~P. Woodruff.
\newblock {Spectrum Approximation Beyond Fast Matrix Multiplication: Algorithms
  and Hardness}.
\newblock 2018.

\bibitem{Nelson2013}
Jelani Nelson and Huy~L. Nguyễn.
\newblock {OSNAP: Faster Numerical Linear Algebra Algorithms via Sparser
  Subspace Embeddings}.
\newblock In {\em Proceedings of the 46th annual IEEE Symposium on Foundations
  of Computer Science}, 2013.

\bibitem{Paul2013}
Saurabh Paul, Christos Boutsidis, Malik Magdon-Ismail, and Petros Drineas.
\newblock Random projections and support vector machines.
\newblock In {\em Proceeding of the 16th International Conference on Artificial
  Intelligence and Statistics}, 2013.

\bibitem{LT_Lecture}
Luca Trevisan.
\newblock {Graph Partitioning and Expanders}.
\newblock Handout 7, 2011.

\bibitem{Tropp2010}
Joel~A. Tropp.
\newblock {Improved Analysis of the Subsampled Randomized Hadamard Transform}.
\newblock {\em Advances in Adaptive Data Analysis}, 03(01):8, 2010.

\bibitem{Wihler2014}
Thomas~P. Wihler, Banz Bessire, and Andre Stefanov.
\newblock {Computing the Entropy of a Large Matrix}.
\newblock {\em Journal of Physics A: Mathematical and Theoretical},
  47(24):245201, 2014.

\bibitem{Woodruff2014}
David~P. Woodruff.
\newblock {Sketching as a Tool for Numerical Linear Algebra}.
\newblock {\em Foundations and Trends in Theoretical Computer Science},
  10(1-2):1--157, 2014.

\end{thebibliography}

\newpage \section{The power method}

We consider the well-known power method to estimate the largest eigenvalue of a matrix. In our context, we will use the power method to estimate the largest probability $p_i$ for a density matrix $\bR$.
\renewcommand\labelitemii{$\bullet$}
\begin{algorithm}[htb!]
\begin{itemize}
\item \textbf{INPUT:} SPD matrix $\bA \in \mathbb{R}^{n \times n},$ integers $q,\ t > 0$.
\item For $j=1,\dots,q$
\begin{enumerate}
\item Pick uniformly at random a vector $\bx_0^j \in \{ +1, -1\}^{n}$.
\item For $i=1,\dots,t$
\begin{itemize}
 \item $\bx_i^j = \bA \cdot \bx^j_{i-1}$.
 \end{itemize}
\item Compute: $\tilde{p}_1^j =  \frac{{\bx_t^j}^T \bA \bx_t^j}{{\bx_t^j}^T \bx_t^j}$.
 \end{enumerate}
\item \textbf{OUTPUT:} $\tilde{p}_1 =  \max_{j=1\ldots q} \tilde{p}_1^j$.
\end{itemize}
\caption{Power method, repeated $q$ times.}\label{alg:power}
\end{algorithm}

\noindent Algorithm~\ref{alg:power} requires $\OO(qt (n + \nnz(\bA)))$ arithmetic operations
to compute $\tilde{p}_1$. The following lemma appeared in~\cite{Boutsidis2016}, building upon~\cite{LT_Lecture}.
\begin{lemma}
	\label{lem:power}
Let $\tilde{p}_1$ be the output of Algorithm~\ref{alg:power} with $q=\left\lceil 4.82 \log(1/\delta)\right\rceil$ and $t = \left\lceil\log \sqrt{4n}\right\rceil$. Then, with probability at least $1-\delta$,
$$\frac 1 6 p_1 \le \tilde{p}_1 \leq p_1.$$

The running time of Algorithm~\ref{alg:power} is $\OO\left(\left(n+\nnz(\bA)\right)\log(n)\log\left(\frac{1}{\delta}\right)\right).$
\end{lemma}

\section{The Clenshaw Algorithm}

We briefly sketch Clenshaw's algorithm to evaluate Chebyshev polynomials with matrix inputs. Clenshaw's algorithm is a recursive approach with base cases $b_{m+2}(x) = b_{m+1}(x) = 0$ and the recursive step (for $k = m, m-1, \ldots, 0$):

\begin{equation}\label{eq:clenshaw_recstep}
b_k(x) = \alpha_k + 2xb_{k+1}(x) - b_{k+2}(x).
\end{equation}
(See Section~\ref{sxn:cheb} for the definition of $\alpha_k$.) Then,
\begin{equation}\label{eq:clenshaw_finalsum}
f_m(x) = \frac{1}{2}\left(\alpha_0 + b_0(x) -b_2(x)\right).
\end{equation}
Using the mapping $x\rightarrow 2(x/u)  - 1$, eqn.~\eqref{eq:clenshaw_recstep} becomes
\begin{equation}\label{eq:clenshaw_recstep2}
b_k(x) = \alpha_k + 2\left(\frac{2}{u}x-1\right)b_{k+1}(x) - b_{k+2}(x).
\end{equation}
In the matrix case, we substitute $x$ by a matrix. Therefore, the base cases are $\bB_{m+2}(\bR) = \bB_{m+1}(\bR) = \bzero$ and the recursive step is
\begin{equation}\label{eq:clenshaw_recstep2m}
\bB_k(\bR) = \alpha_k\bI_n + 2\left(\frac{2}{u}\bR-\bI_n\right)\bB_{k+1}(\bR) - \bB_{k+2}(\bR)
\end{equation}
for $k = m, m-1, \ldots, 0$. The final sum is
\begin{equation}\label{eq:clenshaw_finalsum_m}
f_m(\bR) = \frac{1}{2}\left(\alpha_0\bI_n + \bB_0(\bR) -\bB_2(\bR)\right).
\end{equation}
Using the matrix version of Clenshaw's algorithm, we can now rewrite the trace estimation $\bg^\top f_m(\bR)\bg$ as follows. First, we right multiply eqn.~\eqref{eq:clenshaw_recstep2m} by $\bg$,
\begin{eqnarray}
\nonumber \bB_k(\bR)\bg &=& \alpha_k\bI_n\bg + 2\left(\frac{2}{u}\bR-\bI_n\right)\bB_{k+1}(\bR)\bg - \bB_{k+2}(\bR)\bg, \\
\by_k & = & \alpha_k\bg + 2\left(\frac{2}{u}\bR-\bI_n\right)\by_{k+1} -\by_{k+2}.\label{eq:fmRg}
\end{eqnarray}
Eqn.~\eqref{eq:fmRg} follows by substituting $\by_i = \bB_i(\bR)\bg$. Multiplying the base cases by $\bg$, we get $\by_{m+2} = \by_{m+1} =\bzero$ and the final sum becomes
\begin{equation}\label{eq:fmRg_final}
\bg^\top f_m(\bR) \bg = \frac{1}{2}\left(\alpha_0(\bg^\top\bg) +
\bg^\top(\by_0 - \by_2)\right).
\end{equation}
Algorithm~\ref{alg_clenshaw_mv} summarizes all the above.

\begin{algorithm}[h!]
	\begin{algorithmic}[1]
		\STATE {\bf{INPUT}}: $\alpha_i$,\ $i = 0,\dots,m$, $\bR\in \real^{n\times n}$, $\bg \in \real^{n}$
		\STATE Set 	$\by_{m+2} =\by_{m+1} = \bzero$
		\FOR {$k=m, m-1, \ldots ,0$ }
		\STATE $\by_k = \alpha_k\bg + \frac{4}{u}\bR\by_{k+1} -  2\by_{k+1}-\by_{k+2}$
		\ENDFOR
        \STATE{\bf{OUTPUT}: $\bg^\top f_m(\bR) \bg =\frac{1}{2}\left(\alpha_0(\bg^\top\bg) + \bg^\top(\by_0 - \by_2)\right)$}
	\end{algorithmic}
	\caption{{Clenshaw's algorithm to compute $\bg^\top f_m(\bR) \bg$.}}
	\label{alg_clenshaw_mv}
\end{algorithm}

\end{document}